\newcommand{\comm}{\textsf{Comm}}
\newcommand{\polylog}{\mathrm{polylog}}
\newcommand{\byzantineconnectivity}{\textsc{Byzantine-Connectivity}}
\newcommand{\byzantinereco}[1]{\textsc{Byzantine Recognition of } #1}
\newcommand{\reco}[1]{\textsc{Recognition-of } #1}
\newcommand{\cO}{\mathcal{O}}
\title{Recognizing Hereditary Properties in the Presence of Byzantine Nodes} 
\author{David Cifuentes-Núñez}{Departamento de Ingeniería Matemática, Universidad de Chile, Santiago, Chile}{dcifuentes@dim.uchile.cl}{https://orcid.org/0009-0002-5464-0372}{}
\author{Pedro Montealegre}{Facultad de Ingenier\'ia y Ciencias, Universidad Adolfo Ib\'a\~nez, Santiago, Chile}{p.montealegre@uai.cl}{https://orcid.org/0000-0002-2508-5907}{This work
was supported by ECOS-ANID  ECOS240020, STIC-AMSUD ECODIST AMSUD240005,  Centro de Modelamiento Matemático (CMM), FB210005, BASAL funds
for centers of excellence from ANID-Chile, FONDECYT 1230599 and ANID-MILENIO-NCN2024\_103}
\author{Ivan Rapaport}{Departamento de Ingeniería Matemática, Universidad de Chile, Santiago, Chile \and Centro de Modelamiento Matemático (UMI 2807 CNRS), Universidad de Chile, Santiago, Chile}{rapaport@dim.uchile.cl}{https://orcid.org/0000-0002-2969-5083}{This work
was supported by  Centro de Modelamiento Matemático (CMM), FB210005, BASAL funds
for centers of excellence from ANID-Chile and FONDECYT 1220142}
\authorrunning{D. Cifuentes-Núñez, P. Montealegre and I. Rapaport} 
\keywords{Byzantine protocols, congested clique, hereditary properties} 
\begin{document}

\maketitle

\begin{abstract}
Augustine et al. [DISC 2022] initiated the study of distributed graph algorithms in the presence of Byzantine nodes in the congested clique model. In this model, there is a set $B$ of Byzantine nodes, where $|B|$ is less than a third of the total number of nodes. These nodes have complete knowledge of the network and the state of other nodes, and they conspire to alter the output of the system. The authors addressed the connectivity problem, showing that it is solvable under the promise that either the subgraph induced by the honest nodes is connected, or the graph has $2|B|+1$ connected components. 

In the current work, we continue the study of the Byzantine congested clique model by considering the recognition of other graph properties, specifically \emph{hereditary properties}. A graph property is \emph{hereditary} if it is closed under taking induced subgraphs. Examples of hereditary properties include acyclicity, bipartiteness, planarity, and bounded (chromatic, independence) number, etc.

For each class of graphs $\mathcal{G}$ satisfying  a hereditary property (a hereditary graph-class), we propose a randomized algorithm which, with high probability, (1) accepts if the input graph $G$ belongs to $\mathcal{G}$, and (2) rejects if $G$ contains at least $|B| + 1$ disjoint subgraphs not belonging to $\mathcal{G}$. The round complexity of our algorithm is
$$\cO\left(\left(\dfrac{\log \left(\left|\mathcal{G}_n\right|\right)}{n} +|B|\right)\cdot\textrm{polylog}(n)\right),$$ where \(\mathcal{G}_n\) is the set of \(n\)-node graphs in \(\mathcal{G}\). 

Finally, we obtain an impossibility result that proves that our result is tight. Indeed, we consider the hereditary class of acyclic graphs, and we prove that there is no algorithm that can distinguish between a graph being acyclic and a graph having $|B|$ disjoint cycles.
\end{abstract}
\newpage
\section{Introduction}

The congested clique model has been intensively studied over the last decade \cite{censor2015algebraic,drucker2014power,korhonen2018towards}. The model is defined as follows. Given a graph $G =(V, E)$, initially
each node, uniquely identified by an ID in $\{1, \dots, n\}=[n]$, knows only its neighborhood in $G$. The nodes want to collectively decide some predicate of \(G\). For instance, decide whether \(G\) is connected.  Communication happens in
synchronous rounds over a clique (the complete communication network where $G$ is embedded), and each
node can only send $O(\log n)$ bits to each other node per round. The goal is to design  distributed algorithms such that all nodes end up knowing the correct answer regarding the input graph $G$. The congested clique  model is extremely powerful, and fast algorithms of $O(\log n)$ or even $O(1)$ rounds have been devised to solve a range of problems such as coloring \cite{chang2019complexity}, MIS \cite{ghaffari2017distributed}, MST \cite{jurdzinski2018mst}, etc.

\subsection{The Byzantine Congested Clique Model}

In \cite{augustine2022byzantine} Augustine et al. initiated the study of distributed graph algorithms {\emph{under the presence of Byzantine nodes}} in the congested clique model. A Byzantine node, in contrast with an honest node,  is a malicious node with unlimited computational power and full knowledge of the states of the honest nodes, that can behave arbitrarily and collude with other Byzantine nodes. In the \emph{Byzantine congested clique model}, there is a set \(B\) of Byzantine nodes, which represent less than one third of the total number of vertices. At the begining of an algorithm, an adversary chooses the vertices that are Byzantine, and uses them to fool the honest nodes making them decide a wrong output. The honest nodes know the number of Byzantine nodes but not their identities. The adversary has full knowledge of the network, unlimited computational power, and knows the algorithm executed by the honest nodes. The objective is to devise algorithms that can decide properties of the input graph $G$ within this highly adversarial framework.

A first issue to consider is how to define the (distributed) output of the model. In classical local decision, we ask that, in the yes-instances, every node accepts, while in the no-instances at least one node rejects. This way of deciding is not suitable under the presence of Byzantine nodes, as these nodes could, for instance, always reject on yes-instances. To cope with this issue, we consider a unanimous decision rule over the subset of honest nodes: in yes-instances, every honest node must accept, and in no-instances all honest nodes must reject. 

Unfortunately, the behavior of the Byzantine nodes could make impossible to distinguish between yes and no-instances. Consider for example the problem of deciding whether a graph is connected, and suppose we are given a disconnected input graph with two connected components. A pair of Byzantine nodes in different connected components could fool any algorithm by faking an edge between them, and otherwise act honestly. A similar situation happens when a connected graph has a bridge and both endpoints of the bridge are Byzantine nodes.

The previous observation was the reason why Augustine et al. \cite{augustine2022byzantine} consider promise problems, the situation where the input is promised to belong to a particular subset. More precisely, their approach was to distinguish the case where the graph induced by the subset of honest nodes is a yes-instance, from the case where the whole graph, including the Byzantine nodes, is a no-instance that is \emph{far} from being a yes-instance. The precise definition of \emph{farness} depends on the particular decision problem we are considering.

\subsection{Byzantine Connectivity}

Let $f \geq 1$. In the problem \(\byzantineconnectivity(f)\), introduced in \cite{augustine2022byzantine},
every honest node must accept when the graph induced by the honest nodes is connected, and every honest node must reject when the graph is \(f\)-\emph{far} from being connected. A graph is \(f\)-far from being connected if it contains at least \(f+1\) connected components. In the intermediate case (i.e., when the input graph is not connected and has at most \(f\) connected components), the output of the honest nodes could be accept or reject, but the decision must be unanimous.

The main results of \cite{augustine2022byzantine} show that \(\byzantineconnectivity(2|B|)\) can be solved in the Byzantine congested clique model (recall that $B$ is the set of Byzantine nodes). More precisely, the problem can be solved with high probability by a randomized algorithm running in $\cO(\polylog(n))$ rounds, or by a deterministic algorithm running in \(\cO(n)\) rounds. The authors also show that no randomized algorithm  can solve  \(\byzantineconnectivity(f)\) with probability greater than \(1/2\) unless \(f \geq 2|B|\). These algorithms are devised under the condition that 
the fraction of Byzantine nodes is less than \(1/3\). In this work we assume the same constraint on  $|B|$.

\subsection{Hereditary Properties} 

Our goal is to design algorithms in the Byzantine congested clique model for the broad
spectrum of hereditary properties. Note that a property \(\mathcal{P}\) 
defines a class of graphs $\mathcal G$,  where $G$ satisfies $\mathcal P$ iff 
$G\in \mathcal G$. A class of graphs $\mathcal G$ is hereditary 
if every induced subgraph of a graph  $G \in \mathcal G$ is also in $\mathcal{G}$.
In other words, the class is closed under induced subgraphs. Examples of hereditary properties (and hereditary graph classes) are planarity, acyclicity, bounded chromatic number, bipartiteness, etc. Given a hereditary graph class \(\mathcal{G}\), we study the problem of deciding whether a given graph belongs to \(\mathcal{G}\), also called the \emph{recognition of} \(\mathcal{G}\). 

Byzantine nodes can collude to disseminate misinformation about their neighbors and eventually
deceive the honest nodes into producing an incorrect output. For instance, consider 
the problem of deciding bipartiteness and the case where the input graph is indeed bipartite. The adversary could pick two nodes in the same partition  in order to fake an edge between them, and otherwise act as an honest node. Since the honest nodes cannot distinguish if a node is Byzantine or not, they cannot know which edges are fake, and hence cannot distinguish a yes from a no instance. In other words, as in the case of connectivity, we  have to consider promise problems.

Let \(f\geq 1\) and let \(\mathcal{G}\) be a hereditary graph class. We say that a graph \(G\) is \emph{$f$-far from $\mathcal{G}$} if $G$ contains at least $f+1$ disjoint induced subgraphs not in $\mathcal{G}$. Then, we define the problem \(\byzantinereco{\mathcal{G}(f)}\) as the following promise problem:

\bigskip

\noindent {\bf Input:} A graph \(G=(V,E)\) and a set \(B\) of Byzantine nodes. \\

\noindent {\bf Task:} 
\begin{itemize}
\item If \(G\) belongs to \(\mathcal{G}\), every honest node must accept. 
\item If \(G\) is \(f\)-far from $\mathcal{G}$, every honest node must reject.
\item Otherwise, each node can accept or reject.
\end{itemize}

The parameter $f > 0$  is referred to as the \emph{gap of the promise problem}. For simplicity, when the input graph is not in $\mathcal G$  and is not $f$-far 
from $\mathcal G$, we do not require unanimous output from the honest nodes, i.e., some nodes can accept while others can reject. This differs from the definition of  Augustine et al. in \cite{augustine2022byzantine}, where the honest nodes are mandated to produce the same output on any given instance. Nevertheless, this difference does not impact our results, as achieving unanimity among the honest nodes can be achieved by using a \emph{Byzantine everywhere agreement} algorithm (refer to Theorem 10 and Algorithm 1 in \cite{augustine2022byzantine}).

\subsection{Our Results}

We show that for every hereditary graph class 
\(\mathcal{G}\)  there is an algorithm that solves \\
\(\byzantinereco{\mathcal{G}(|B|)}\), as long as the fraction of Byzantine nodes is smaller than 1/3 (we omit this  hereafter, because this assumption is part of the model). More formally,  we show the following theorem:

\begin{restatable}{theorem}{main}\label{theo:main}
Let  \(\mathcal{G}\) be a hereditary graph class. There is a randomized algorithm in the Byzantine congested clique model that   solves   \(\byzantinereco{\mathcal{G}(|B|)}\) with high probability   in 
$$\cO\left(\left(\dfrac{\log \left(\left|\mathcal{G}_n\right|\right)}{n} +|B|\right)\cdot\polylog(n)\right)$$ rounds where \(\mathcal{G}_n\) is the set of \(n\)-node graphs in \(\mathcal{G}\). 
\end{restatable}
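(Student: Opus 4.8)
The plan is to reduce $\byzantinereco{\mathcal{G}(|B|)}$ to a suitable "broadcast the whole graph and check locally" strategy, but done carefully so that (i) the communication cost is governed by the number of bits needed to describe a graph in $\mathcal{G}_n$, namely $\log|\mathcal{G}_n|$, and (ii) Byzantine nodes cannot force a wrong unanimous decision. The key structural observation is the hereditary gap: if $G\in\mathcal{G}$ then the subgraph induced by the honest nodes is in $\mathcal{G}$ (heredity), whereas if $G$ is $|B|$-far from $\mathcal{G}$ then it contains $|B|+1$ vertex-disjoint induced subgraphs outside $\mathcal{G}$, and by pigeonhole at least one of them consists entirely of honest nodes — so the honest-induced subgraph is \emph{not} in $\mathcal{G}$. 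Thus it suffices for the honest nodes to agree on whether the graph induced by the honest set belongs to $\mathcal{G}$; the promise does the rest. The subtlety is that the honest nodes do not know which nodes are honest, and a Byzantine node can lie about its incident edges.

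First I would handle the case where $\log|\mathcal{G}_n| = \Omega(n\cdot|B|)$ or, more simply, invoke the trivial $\cO(n)$-round routine (Lenzen's routing / full information exchange) whenever the target bound already allows $\Omega(n\,\polylog n)$ rounds; in that regime every honest node can learn the entire adjacency information claimed by every node, detect inconsistencies (an edge $uv$ is "real" only if both $u$ and $v$ claim it), and the honest-induced subgraph is common knowledge. The interesting regime is $\log|\mathcal{G}_n| = o(n\cdot|B|)$, so $|\mathcal{G}_n| = 2^{o(n|B|)}$, which means $\mathcal{G}$ is sparse enough that a graph in $\mathcal{G}_n$ admits a short description. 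I would use a pre-agreed labelling/encoding $\mathrm{enc}:\mathcal{G}_n\to\{0,1\}^{\cO(\log|\mathcal{G}_n|)}$ known to all nodes. The idea is that if $G\in\mathcal{G}$, then in particular the honest-induced graph $H$ has a short certificate $\mathrm{enc}(H)$; the protocol asks nodes to broadcast (via Lenzen routing, so $\cO(\polylog n)$ rounds to disseminate $\cO(\log|\mathcal{G}_n|/n)$ words per node, hence $\cO((\log|\mathcal{G}_n|/n)\cdot\polylog n)$ rounds for the whole $\cO(\log|\mathcal{G}_n|)$-bit object) a candidate encoding, and then every honest node locally verifies that the decoded graph is consistent with its own neighborhood.

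The heart of the argument, and the step I expect to be the main obstacle, is the \textbf{verification and fault-tolerance layer}: making sure that (a) on yes-instances all honest nodes accept despite Byzantine interference, and (b) on $|B|$-far instances no protocol run — no matter what the $|B|$ Byzantine nodes inject — causes an honest node to accept. For (a), since $G\in\mathcal{G}$ each honest node $v$ can compute, from global agreement on IDs and a leader-election/sampling step, a consistent short encoding; the randomized ingredient is a sublinear \emph{sketch}/\emph{fingerprint} (à la Newman, or $\ell_0$-sampling) of each node's true neighborhood, aggregated so that honest nodes can cross-check the broadcast encoding against reality with high probability while Byzantine nodes — being a minority, $<n/3$ — cannot simultaneously corrupt more than a bounded number of fingerprint coordinates; here I would lean on the $<1/3$ fraction to run a Byzantine-resilient aggregation (majority over $\Theta(\log n)$ independent repetitions). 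For (b) the key point is robustness of the gap: any purported encoding $\mathrm{enc}(G')$ with $G'\in\mathcal{G}_n$ must, on a $|B|$-far instance, disagree with the real neighborhood of some honest node — indeed of many honest nodes — because $G'\in\mathcal{G}$ forces $G'$ to induce a graph in $\mathcal{G}$ on the honest set, contradicting the pigeonhole fact that the honest-induced subgraph contains an induced obstruction. So an honest node detecting such a discrepancy rejects, and a final Byzantine-agreement round (Theorem 10 of \cite{augustine2022byzantine}) propagates this rejection to all honest nodes. I would finish by adjusting constants in the repetition count so the total failure probability is $n^{-\Omega(1)}$, giving the claimed round complexity $\cO((\log|\mathcal{G}_n|/n + |B|)\cdot\polylog n)$, with the $|B|\cdot\polylog n$ term absorbing the cost of the Byzantine-agreement / everywhere-agreement subroutines.
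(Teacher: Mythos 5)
Your high-level shape (reconstruct a succinct description of the graph, verify it locally, exploit a pigeonhole argument on the $|B|+1$ disjoint obstructions) matches the paper, and your opening observation --- that heredity plus pigeonhole reduces the problem to deciding membership of the honest-induced subgraph --- is correct and clean. But there is a genuine gap at exactly the place you flag as ``the main obstacle,'' and your sketch of how to close it would not work. The problem is completeness on yes-instances, not just soundness: if $G\in\mathcal{G}$, a Byzantine node $b$ can simply deny (or invent) an edge to an honest neighbor $v$. Then \emph{any} graph $G'$ that the network agrees on will genuinely disagree with $v$'s true neighborhood, and under your rule ``reject upon detecting a discrepancy with my own neighborhood,'' $v$ rejects a yes-instance. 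No amount of fingerprinting or Byzantine-resilient aggregation fixes this, because the corruption is in the input data itself, not in the transport layer. The paper's solution is the combinatorial core you are missing: it classifies each potential edge as an \emph{agreement} or \emph{disagreement} edge according to whether the two endpoints' committees concur, notes that every disagreement edge has a Byzantine endpoint while honest--honest edges are always correctly agreed, and then has each node run a brute-force search over all candidate Byzantine sets $S$ with $|S|=|B|$ that cover every disagreement edge, and over all possible rewirings $F$ of edges incident to $S$, accepting iff some rewired graph lands in $\mathcal{G}$ (Lemma~\ref{lem:measuringgap}). The covering (``consistency'') condition on $S$ is what makes the no-case sound, and the existential quantifier over $(S,F)$ is what makes the yes-case complete. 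Your proposal has neither ingredient.

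A second, smaller gap: you never say who computes the candidate encoding or how, which is nontrivial since no single node knows the graph. The paper invokes the deterministic reconstruction theorem of Montealegre et al.\ (Proposition~\ref{prop:recon}), applied not to $\mathcal{G}_n$ but to the enlarged class $\mathcal{G}^{|B|}_n$ of possible agreement graphs, whose log-cardinality is $\log|\mathcal{G}_n|+\cO(|B|\,n)$; it then simulates that algorithm through committees. This is also where the $|B|\cdot\polylog(n)$ term in the round complexity actually comes from (together with a separate $\cO(|B|\log^2 n)$-round routine for disseminating the disagreement edges), not from the everywhere-agreement subroutine as your accounting suggests. Your budget of $\cO\bigl((\log|\mathcal{G}_n|/n)\cdot\polylog n\bigr)$ rounds for broadcasting the encoding undercounts the $\cO(|B|\,n)$ extra bits needed to describe the rows of the Byzantine nodes, which cannot be charged to $\log|\mathcal{G}_n|$.
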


  The logarithm of the cardinality of \(\mathcal{G}_n\) is called the \emph{growth} of \(\mathcal{G}\). 
For many classes of graphs, we can bound their growth by an explicit sub-quadratic function. For instance,  forests, planar graphs, interval graphs, graphs excluding fixed minor have \emph{factorial growth}, meaning that \(\log \left(\left|\mathcal{G}_n\right|\right)\) is \( \cO(n \log n)\). For  all those classes, our algorithm runs in $\cO\left(|B|\cdot \polylog(n)\right)$ rounds. 

Finally, we obtain a negative result, which show the tightness of the gap of the promise problem. More precisely, we prove that  when \(f<|B|\), it is impossible to solve 
{\sc Byzantine-Recognition-of-Forests}\((f)\) in the Byzantine Congested Clique model. This holds even for randomized algorithms. 

\begin{restatable}{theorem}{secondmain}\label{theo:impossibility}
If $f < |B|$ then no randomized algorithm in the Byzantine congested clique model can solve  \(\textsc{Byzantine-Recognition-of-Forests}(f)\) with an error probability strictly smaller than \(1/2\).
\end{restatable}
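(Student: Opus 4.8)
The plan is to use an indistinguishability argument based on the standard technique for Byzantine lower bounds: construct two input distributions, one over yes-instances (forests) and one over no-instances ($f$-far from forests, i.e., containing $f+1$ disjoint cycles), such that the adversary can make the two scenarios produce identical transcripts from the point of view of the honest nodes. Concretely, I would start from a graph that is a disjoint union of several paths (a forest), and a second graph that is a disjoint union of several cycles (which, when it has $f+1$ components, is $f$-far from being a forest). The key observation is that a cycle on a vertex set $S$ differs from a path on $S$ by exactly one edge; if one endpoint of that ``closing'' edge can be designated Byzantine, the Byzantine node can simulate the presence or absence of that edge at will.

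The main steps are as follows. First, I fix the number of Byzantine nodes $|B|$ and set up a vertex partition into $|B|$ blocks, each block hosting one candidate cycle/path, plus possibly some padding. Each block will contain exactly one designated vertex that is potentially Byzantine. Second, I define the no-instance $G_{\mathrm{no}}$ as the disjoint union of $|B| \ge f+1$ cycles (one per block), so it contains $|B| \ge f+1$ vertex-disjoint cycles and is therefore $f$-far from being a forest; the honest nodes must reject. Third, I define the yes-instance $G_{\mathrm{yes}}$ by deleting, from each cycle, the single edge incident to that block's designated vertex that ``closes'' the cycle, turning each cycle into a path; $G_{\mathrm{yes}}$ is a disjoint union of paths, hence a forest, and the honest nodes must accept. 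Fourth, I describe the adversary: in the no-instance execution the designated vertices are honest (the graph really has the closing edges), whereas in the yes-instance execution the designated vertices are Byzantine and behave as if the closing edges were present — i.e. each Byzantine node runs the honest protocol on its true (path) neighborhood but pretends it also has the extra cycle edge, sending exactly the messages an honest node with the cycle-neighborhood would send. The crucial point is that the other endpoint of each closing edge is an honest node in $G_{\mathrm{no}}$; so in the no-instance that honest node genuinely has this edge, while in the yes-instance it does not — but because it is a single designated (Byzantine) partner that controls whether that edge ``exists,'' from the perspective of all the truly honest nodes the two executions are statistically indistinguishable: the joint distribution of all messages sent and received by honest nodes is identical. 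Hence any randomized algorithm behaves the same (in distribution) on both, so it cannot accept with probability $> 1/2$ on $G_{\mathrm{yes}}$ while rejecting with probability $> 1/2$ on $G_{\mathrm{no}}$, contradicting correctness with error $< 1/2$.

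A subtle point to handle carefully is the direction of the faked edge and who is honest on each side. In $G_{\mathrm{no}}$, both endpoints of a closing edge are ordinary participants, but we need the designated vertex to be the one the adversary ``reuses.'' The cleanest fix is to make each closing edge go between a designated (Byzantine-in-yes) vertex $u$ and an honest vertex $v$: in $G_{\mathrm{no}}$, $u$ is honest and the edge $uv$ is real; in $G_{\mathrm{yes}}$, $u$ is Byzantine and fakes the edge $uv$ toward $v$ while $v$ honestly reports no such edge. An honest node $v$ reporting a non-edge and a Byzantine neighbor asserting an edge is exactly the ambiguity that the model permits, and no honest node (other than $v$, who cannot tell whether $u$ is lying) ever sees a discrepancy. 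Since we need $|B|$ Byzantine nodes to cover $f+1 \le |B|$ closing edges (one per cycle), and we have $f < |B|$, this is feasible; the remaining $|B|-(f+1)$ Byzantine nodes, if any, simply act honestly. One must also check the $|B| < n/3$ constraint is respected, which is immediate by choosing $n$ large enough relative to $|B|$ and padding with isolated honest vertices.

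The main obstacle I anticipate is making the indistinguishability rigorous in the presence of randomness and adaptivity: one must argue that a single, non-adaptive adversary strategy (the faking described above) yields, for every fixed random string of the honest algorithm, exactly the same transcript of honest-to-honest and honest-from-Byzantine messages in both executions — and in particular that the Byzantine node can compute the messages it needs to fake, which it can because it has full knowledge of the honest states and the algorithm. Once the transcripts coincide as random variables, the outputs of the honest nodes have identical distributions in the two instances, and the $1/2$-error bound follows by a standard averaging argument over the two instances. I would present this via a coupling of the two executions, matching random strings and message histories step by step by induction on the round number, which is the technically delicate but conceptually routine part of the argument.
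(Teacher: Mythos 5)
There is a genuine gap in your indistinguishability argument. You compare a no-instance in which \emph{both} endpoints of each closing edge are honest (so both truthfully report that the edge $uv$ exists) with a yes-instance in which $u$ is Byzantine and fakes the edge while the honest $v$ truthfully reports that $uv$ does \emph{not} exist. These two executions are not indistinguishable: $v$ is honest in both, its local input (its true neighborhood) differs between them, and so the messages it sends to the other honest nodes may differ. Concretely, an algorithm in which every node announces its claimed neighborhood sees ``$u$ and $v$ both assert $uv$'' in your no-instance but ``$u$ asserts $uv$, $v$ denies it'' in your yes-instance, and can therefore separate the two cases. Your remark that ``no honest node other than $v$ ever sees a discrepancy'' conflates a discrepancy \emph{within} one execution with a difference \emph{between} the two executions; it is the latter that must be ruled out, and it is not. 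The Byzantine node cannot repair this, since it has no power to alter the honest node $v$'s input or the messages $v$ sends.

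The missing idea — which is how the paper's proof works — is to make \emph{both} instances adversarial and to place the Byzantine node at a \emph{different} endpoint of the contested edge in each. In the yes-instance the Byzantine node sits at $v_1$ and falsely asserts the missing edge $\{v_1,v_2\}$ while the honest $v_2$ denies it; in the no-instance the Byzantine node sits at $v_2$ and falsely denies the real edge while the honest $v_1$ asserts it. In both executions the transcript seen by every honest third party is identical (``$v_1$ says yes, $v_2$ says no''), and each of $v_1,v_2$ individually sees a view consistent with being the honest party in either scenario, so the coupling you describe in your last paragraph then goes through. The rest of your construction (cycles versus paths, $f+1\le|B|$ designated vertices, padding to keep $|B|<n/3$) is consistent with the paper's, which uses $f+1$ triangles versus $f+1$ three-vertex paths; only this placement of the Byzantine nodes in the no-instance needs to be added to make the argument correct.
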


\subsection{Related Work}

The notion of Byzantine nodes has a long history, originating from the famous papers by Lamport, Pease, and Shostak \cite{lamport2019byzantine,pease1980reaching}. The algorithms developed since then have mostly focused on primitive tasks such as consensus, leader election, broadcast, etc. \cite{augustine20,ben2006byzantine,feldman1997optimal,king2011breaking,king2006scalable}, but the result in \cite{augustine2022byzantine} constitutes  a first step towards the study of algorithms that solve {\emph{more complex problems}} under the presence of (a large number of) Byzantine nodes.

The congested clique model is a message-passing paradigm in distributed computing, introduced by Lotker, Patt-Shamir, Pavlov, and Peleg~\cite{lotker2005minimum}. In the congested clique, the underlying communication network forms a complete graph. Consequently, in this network with  diameter 1, the issue of locality  is taken out of the
picture. The primary focus shifts to congestion, a pivotal concern alongside locality in distributed computing. The point is the following: when the communication network is a complete graph, and the cost of local computation is disregarded, the sole impediment to executing any task arises from congestion. The power of the congested clique model comes from the fact that nodes are fully interconnected. Such power explains the ability to simulate efficiently well-studied parallel circuit models \cite{drucker2014power}, implying that establishing lower bounds in the congested clique model  would imply a significant progress in Boolean circuit theory. This is why the quest for lower bounds is considered elusive, to say the least.  As mentioned in the introduction, the congested clique model has been used 
extensively to solves problems such as coloring \cite{chang2019complexity,czumaj2020simple}, MIS \cite{ghaffari2017distributed,konrad2018mis}, algebraic methods \cite{censor2015algebraic}, reconstruction \cite{montealegre2020graph}, and more.

\section{Preliminaries}\label{sec:prelim}

\subsection{Committee of Leaders}

We begin by defining a \emph{committee of leaders} which is a set of nodes containing at least a \(2/3\)-fraction of honest nodes. In our algorithms the committee of leaders will make possible to take global decisions, as well as executing certain subroutines.

The problem of creating a committee of leaders is addressed by Agustine et al. in \cite{augustine2022byzantine}, who use the following result of \cite{king2006scalable} in order to compute a committee of leaders in the Byzantine congested clique model. 

\begin{proposition}\cite{king2006scalable}\label{L1} There is a randomized protocol that w.h.p. computes a committee of leaders $L$ of size $\Theta(\log(n))$ in $\mathcal{O}(\polylog(n))$ rounds, for which at least a fraction of $2/3$ of the committee members are honest nodes. By computing \(L\), we mean that all honest nodes in the network know the identities of the nodes in~$L$.  
\end{proposition}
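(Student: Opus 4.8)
The plan is to construct the committee $L$ by separating \emph{selection} (choosing $\Theta(\log n)$ candidate nodes whose identities every honest node agrees upon) from \emph{certification} (arguing that the Byzantine fraction inside the chosen set is at most $1/3$ with high probability). The natural selection primitive is a balls-into-bins scheme in the spirit of Feige's lightest-bin protocol: partition the identifier space into $k = \Theta(n/\log n)$ bins, have every honest node pick a bin uniformly at random, and let $L$ be the set of nodes landing in the least populated bin. A routine Chernoff-plus-union-bound argument over the $k$ bins shows that each bin, and in particular the lightest one, receives $\Theta(\log n)$ honest nodes, so $|L| = \Theta(\log n)$ w.h.p.

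The first obstacle is that a Byzantine node can \emph{equivocate}, announcing different bin choices to different honest nodes, so that honest nodes compute different histograms and disagree on which bin is lightest. Here the power of the clique helps decisively: because every pair of nodes is adjacent and fewer than $n/3$ nodes are faulty, a single $O(\log n)$-bit value per sender can be disseminated by a Bracha-style reliable broadcast in $O(1)$ rounds, after which all honest nodes hold an identical assignment $a\colon V \to [k] \cup \{\bot\}$ — honest senders contribute their true choice, while a faulty sender is either pinned to one delivered value or is identically flagged $\bot$ by all honest nodes. From this common assignment every honest node computes the same lightest bin, hence the same $L$, which establishes the required common knowledge of the committee.

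The crux, and the part I expect to be hardest, is controlling the Byzantine fraction inside $L$. Since the adversary has full information and may fix its bins after seeing the honest announcements, it will try to make a bin holding many of its own nodes be the lightest; but stuffing a bin only makes it heavier, so the lightest bin has total size at most the average $n/k$, and combining this with the uniform honest lower bound $(1-\epsilon)(n-|B|)/k$ per bin shows the Byzantine fraction in $L$ tends to $\beta/(1-\beta)$ with $\beta = |B|/n$. This already yields an \emph{honest majority} (as $\beta/(1-\beta) < 1/2$ whenever $\beta < 1/3$), but it does \emph{not} by itself reach the stronger $2/3$ guarantee. To close this gap one iterates: use the honest-majority committee to extract a nearly unbiased common coin, then resample a fresh committee of size $\Theta(\log n)$ \emph{uniformly} at random, whose Byzantine fraction now concentrates around $\beta$ itself and so, taking the committee size a large enough multiple of $\log n$ and using that $\beta$ is bounded below $1/3$, stays below $1/3$ w.h.p.; agreement is automatic once all honest nodes share the coin, since they all compute the same sampled identifiers. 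Producing an agreed, sufficiently unbiased source of common randomness in the full-information model — the collective-coin-flipping bottleneck that the adversary perpetually tries to bias — is precisely the technical heart of the construction of King and Saia, and by recursing on shrinking groups the whole procedure runs in $O(\polylog(n))$ rounds, giving the stated bounds.
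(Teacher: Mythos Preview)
The paper does not prove this proposition at all: it is quoted as a black-box result from \cite{king2006scalable} (used in \cite{augustine2022byzantine}) and invoked without argument. So there is no ``paper's own proof'' to compare against; your proposal is really a sketch of the construction behind the cited result.

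As such a sketch, your outline is sound and faithful to the ideas of Feige's lightest-bin election combined with reliable broadcast and the King--Saia recursive coin machinery. Two minor remarks. First, your intermediate claim that one round of lightest-bin yields only an honest \emph{majority} but not a $2/3$ honest fraction slightly undersells the analysis: the bound you derive, namely a Byzantine-to-honest ratio of roughly $\beta/(1-\beta)$, is \emph{equivalent} to a $2/3$ honest fraction whenever $\beta<1/3$ (since $\beta/(1-\beta)<1/2$ iff the honest share exceeds $2/3$). The genuine reason iteration is needed is the one you implicitly touch on: because the model only assumes $|B|<n/3$, the gap $1-3\beta$ may be $o(1)$, and then the $\Theta(1/\sqrt{\log n})$ Chernoff slack in the per-bin honest count swamps that gap, so one round does not suffice. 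Second, the step ``use the honest-majority committee to extract a nearly unbiased common coin'' is exactly where the full-information adversary bites hardest; you correctly flag this as the technical heart and defer to the cited work, which is appropriate here since the present paper does the same.
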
  

\subsection{Committee Structure of a Graph} 

For a given node \( u \), we define a set of nodes as the \emph{committee of \(u\)}, which is used to represent \( u \) in a reliable way. The committees are used to simulate the behavior of the set of nodes within an algorithm designed for the congested clique model (without Byzantine nodes). To achieve this, we require that the majority of the nodes in each committee be honest. This way, we can ensure an ``honest'' execution of an algorithm simply by adopting the majority behavior in each committee. This approach presents two challenges.

First, we need that the honest nodes in the committee of \( u \) learn the neighborhood of  \( u \), in order to accurately simulate \( u \) within an algorithm. This requires \( u \) to communicate its set of neighbors to its committee. A potential complication arises if \( u \) is a Byzantine node, as it might provide a false set of neighbors. Furthermore, \( u \) may mislead different nodes within its committee in different ways.

The second challenge is related to the committee size. If we are interested in keeping the complexity (number of rounds) of an algorithm low, we need to keep the committee sizes small. Given that the neighborhood of a node could contain an arbitrary number of nodes, we  need to execute load balancing algorithms in order to replicate the information within a committee in a reasonable number of rounds. Additionally, we need that a node participates in a bounded number of committees of other nodes, to distribute the workload relatively evenly.

Given a graph $G=(V,E)$ embedded in an $n$-node Byzantine congested clique model, the  \emph{committee} of a node $v\in V$, denoted $\comm(v)$, is a subset of nodes satisfying the following properties:

\begin{enumerate}
    \item The majority of its members are honest. 
    \item Every $u\in \comm(v)$ has an information (not necessary correct) about $N_G(v)$ (the neighborhood of $v$ in $G$). More precisely, for each \(w\in V\), node \(u \in \comm(v)\) owns a bit $c_v^u(w)$ called \emph{consistency bit of \(v\) with respect to \(w\)}.  The value of \(c_v^u(w)\) equals \(1\) if and only if  \(u\) believes that  $vw \in N_G(v)$.
    \item Every honest node in $\comm(v)$ has the same information of $N_G(v)$. Formally, $\forall u_1,u_2\in \comm(v)\setminus B, \forall w\in V, c_v^{u_1}(w)=c_v^{u_2}(w)$. 
\end{enumerate}

Given a committee \(\comm(v)\) of \(v\), we have that all the honest nodes share the same consistency bits. Therefore, in the following we denote by $c_v(w)$ the consistency bit of \(v\) with respect to \(w\) of any of the honest nodes in $\comm(v)$. For each \(v\in V\) we denote \(\comm^{-1}(v)\) the set of nodes \(u\) such that \(v\in \comm(u)\). 
We refer to a \emph{Committee structure} to a collection of committees  \(\comm(G) = \{\comm(v)\}_{v\in V}\) satisfying the following conditions:

\begin{enumerate}
    \item For every pair of nodes \(u,v\in V\setminus B\), node \(u\) knows the members of \(\comm(v)\). 
    \item $\forall u,v \in V\setminus B$, $\forall w \in \comm(u)\setminus B$, node $w$ knows $c_{v}(u)$. 
    \item If $u$ is honest and $uv \notin E$, then $c_u(v) = 0$. 
    \item If $u$ and $v$ are honest, then  $uv \in E$ if and only if $c_u(v) =1$. 
            \item For each $v \in V$,   $|\comm(v)| = \cO(\log(n))$. 
        \item For each \(v\in V\), \(|\comm^{-1}(v)|= \cO(\log^2n)\).   
\end{enumerate}

In the following, we say that the committees of \(u\) and \(v\) \emph{agree} in the edge \(uv\) if \(c_u(v)= c_v(u) \). In our definition, we allow for the information held by the honest nodes in the committee of $u$ regarding the neighborhood of $N(u)$ to be incorrect. In other words, the committee of a node may believe that there are \emph{false edges} (not present in $G$) as well as \emph{missing edges}, i.e.,  believe that there is no edge between a pair of nodes when in fact there is one. Our definition also allows that a pair of committees agree in a false edge. Nevertheless, condition 3 states that the committees of an honest node \(u\) never agree   with the committee of a Byzantine node \(v\) on a false edge \(uv\).  Moreover, the committee structure have full knowledge of the edges between honest nodes in the embedded graph $G$. In particular, the committees of a pair of  honest vertices linked by an edge agree on the existence of that edge. 

In \cite{augustine2022byzantine} it is given an algorithm in the Byzantine congested clique model, which with high probability computes a committee structure of the input graph in a poly-logarithmic number of rounds.

\begin{proposition}\cite{augustine2022byzantine}\label{TEO-Comm} 
There exists a randomized protocol that, given a graph $G=(V,E)$ embedded in an $n$-node Byzantine congested clique model and an honest node \(\ell\),   w.h.p. computes a committee structure \(\comm(G)\) of \(G\) in $\mathcal{O}(\polylog(n))$ rounds.
\end{proposition}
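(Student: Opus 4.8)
The plan is to construct the committee structure in three stages: first fix a small honest-majority source of public randomness, then use it to \emph{publicly} define every committee (so that the ``everyone knows every committee'' condition comes for free), and finally disseminate each node's adjacency to its committee in a way that is robust both to a Byzantine source and to Byzantine relays while respecting the $\cO(\log n)$-bit bandwidth bound. I would start by invoking Proposition~\ref{L1} to obtain a committee of leaders $L$ of size $\Theta(\log n)$ with a $2/3$ honest majority. Using $L$, the honest leaders agree on a single random seed $s$ (via Byzantine agreement inside $L$, with the given honest node $\ell$ serving as an anchor to initiate the computation and rebroadcast the outcome); because honest leaders are a strict majority, every honest node adopts the same $s$, and a standard commit-then-reveal style combination prevents the Byzantine leaders from biasing it. This stage costs $\cO(\polylog(n))$ rounds. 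I would then define $\comm(v)$ as the image of a public hash of $(s,v)$ into $\Theta(\log n)$ node identifiers. Since $s$ is public and fixed, every honest node can locally recompute $\comm(w)$ for all $w$, which immediately yields Condition~1 of the committee structure.

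The honest-majority requirement (Condition~1 of the committee definition) is then purely probabilistic, and here the adversary model is decisive: the set $B$ is fixed at the start, \emph{before} $s$ is drawn, so for the fixed set $B$ with $|B| < n/3$ a hash-selected $\Theta(\log n)$-subset contains fewer than half Byzantine members except with probability $n^{-\Omega(1)}$, by a Chernoff bound; a union bound over the $n$ committees gives honest majority everywhere w.h.p. The same counting controls the loads: the total committee membership is $\cO(n\log n)$, so each node lies in $\cO(\log n)$ committees in expectation, and Chernoff plus a union bound upgrade this to $|\comm(v)| = \cO(\log n)$ and $|\comm^{-1}(v)| = \cO(\log^2 n)$ w.h.p., establishing Conditions~5 and~6. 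None of this costs additional rounds.

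The heart of the argument is disseminating adjacencies (Conditions~2--4 of the structure and Conditions~2--3 of the definition). I would have each $v$ deliver its own neighborhood vector to the members of $\comm(v)$, so that the bits $c_v(w)$ are always derived from \emph{$v$'s own report}. This choice is what enforces Condition~3: an honest $u$ reports $0$ for every non-neighbor, so its committee never records a false edge, and hence an honest $u$ and a Byzantine $v$ can never \emph{agree} on a false edge $uv$ (we would have $c_u(v)=0\neq c_v(u)$). When $v$ is honest the delivered vector is the truth, which gives Condition~4 for honest--honest pairs; when $v$ is Byzantine the honest members of $\comm(v)$ must still converge to a common vector, which I would guarantee by an intra-committee reconciliation step exploiting their honest majority. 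Finally, Condition~2 of the structure — that a member of $\comm(u)$ also learns $c_v(u)$ for honest $u,v$ — is obtained by a short cross-exchange in which each committee queries the (publicly known) committee of the other endpoint and adopts its honest-majority value.

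The main obstacle — and the only place requiring real work to keep the round count polylogarithmic — is reconciling this reliable dissemination with congestion. A high-degree $v$ cannot ship its full $n$-bit neighborhood to a single committee member over the one link joining them ($\Omega(n/\log n)$ rounds), so direct sending must be replaced by \emph{load-balanced} routing that spreads each node's $\cO(n)$ adjacency symbols, replicated once per committee member, uniformly across all $n^2$ links in the Lenzen style. But load-balanced routing relays through intermediate nodes, and a Byzantine relay could corrupt an \emph{honest} source's data, which intra-committee majority alone cannot repair. I would resolve this by sending each honest source's neighborhood as an error-correcting (e.g.\ Reed--Solomon) codeword split one symbol per relay: since fewer than $n/3$ relays are Byzantine, each committee member collects the $n$ symbols routed to it and decodes the correct vector despite the corrupted shares, while the intra-committee reconciliation absorbs the remaining inconsistency created by a Byzantine source. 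The final check is purely quantitative: the total traffic is $\cO(n^2\,\polylog(n))$ bits, no node sends or receives more than $\cO(n\,\polylog(n))$ bits (using the $\cO(\log n)$ and $\cO(\log^2 n)$ load bounds), and hence the routing, decoding, reconciliation, and cross-exchange all complete in $\cO(\polylog(n))$ rounds, yielding the claimed protocol.
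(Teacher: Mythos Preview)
The paper does not supply its own proof of this proposition; it is quoted verbatim as a result of Augustine et al.~\cite{augustine2022byzantine}, so there is no in-paper argument to compare your proposal against. Your sketch is broadly in the spirit of that construction---public random committee assignment via a shared seed, Lenzen-style load-balanced routing of the adjacency vectors, and error-correction to neutralize Byzantine relays---and the overall architecture is sound.

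Two steps of your sketch would not go through as written, though both are easily repaired. First, since the proposition already hands you an honest node~$\ell$, invoking Proposition~\ref{L1} and a commit--reveal coin flip is unnecessary and, against the \emph{computationally unbounded} adversary assumed here, unsound: commitments hide nothing, so a Byzantine leader can bias the combined seed after seeing the honest contributions. The clean fix is to let $\ell$ itself draw the seed and disseminate it via coded dispersal; because $B$ is fixed before $\ell$ samples, the Chernoff argument for honest-majority committees applies directly. Second, your ``intra-committee reconciliation step exploiting their honest majority'' is not enough as stated: unauthenticated Byzantine broadcast requires a $2/3$ honest fraction, whereas the committees are only promised a strict majority. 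The reconciliation must therefore be routed through the \emph{whole} network (which does satisfy $|B|<n/3$), e.g.\ each committee member reliably broadcasts what it received from $v$ and all honest nodes apply a fixed tie-breaking rule; this still fits in $\cO(\polylog(n))$ rounds and yields the required consistency of the bits $c_v(\cdot)$ across honest committee members.
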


Given the preceding result, the necessity of the aforementioned leaders’ committee $L$ is evident: in order to obtain a committee structure, the existence of an honest node is required, which—due to the characteristics of the Byzantine Congested Clique Model—cannot be identified with absolute certainty. To address this issue, for each leader $l$ in $L$ we execute the corresponding subroutines, including the one responsible for generating a committee structure \(\comm(G)\) of \(G\). By leveraging the properties of the leaders’ committee, we obtain with high probability that at least two-thirds of the executions are reliable.

\subsection{Committee-Based Communication}

The committee structure allows us to simulate honest communications between any pair of nodes, even if one of them is Byzantine. In fact, the complications that Byzantine nodes may cause are reduced to the set of edges that communicate with their committee. From now on, when a node $u$ wants to send a message to a node $v$, we will have the nodes of \(\comm(u)\) communicate with the nodes of \(\comm(v)\). Since the majority of the nodes in a committee are honest, the communication is reliable. In this way, we simulate the rounds of an algorithm in the congested clique model (without Byzantine nodes) through the committees. Obviously, this will have consequences on the number of rounds. Indeed, an algorithm of $R$ rounds in the congested clique model can be simulated by \(\cO(R\cdot\polylog(n))\) rounds through the committees.

Let us be more precise and define the problem \(\textsc{Committe Relay}\). In the input there is a committee structure of the input graph. For each node \(u\in V\) the honest nodes \(w\in \comm(u)\) have for each  node 
$v \in V$ a message \(m(u,v)\) addressed to node \(v\in V\) (i.e. there are \(n\) messages per node, \(n(n-1)\) messages in total). The task consists in ensuring that for each \(v\in V\), each honest node \(w \in \comm(v)\) learns the messages \(m(u,v)\), for each \(u\in V\). 

Augustine et al. provide in \cite{augustine2022byzantine} (Lemma 13 and 14,  Algorithms 3 and 4) an algorithm in the Byzantine congested clique model that, assuming a Committee structure,  solves the \(\textsc{Committe Relay}\) problem in a polylogarithmic number of rounds.

\begin{proposition}\cite{augustine2022byzantine}
\label{prop:comm} Given a committee structure \(\comm(G)\), there is an algorithm in the Byzantine congested clique model solving \(\textsc{Committee Relay}\) w.h.p. in  $\mathcal{O}(\polylog(n))$ rounds of communication. 
\end{proposition}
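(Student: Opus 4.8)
The plan is to implement \textsc{Committee Relay} with a single \emph{direct} routing phase, sending each message straight from its source committee to its destination committee, and then to decode correctly despite Byzantine interference by a local majority vote inside each destination committee. The reason for insisting on single-hop (direct) transmissions is that no message ever transits an intermediate relay, so a Byzantine node can only tamper with the messages it is itself supposed to originate; it never gets the chance to corrupt traffic exchanged between two honest parties. The degree bounds of the committee structure (properties 5 and 6) are exactly what make this relay-free scheme already light enough to finish in $\cO(\polylog n)$ rounds.

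Concretely, I would use the following one-shot schedule: for every source $u$, every $w \in \comm(u)$, every destination $v$, and every $x \in \comm(v)$, node $w$ sends its local copy of $m(u,v)$ directly to $x$, tagged with the labels $(u,v)$. Because the honest nodes know the committee memberships (property 1), each honest sender knows precisely which messages it must dispatch, and each honest receiver $x$ can check from the $(u,v)$ tags whether an incoming message is one it should expect, namely one with $w \in \comm(u)$ and $x \in \comm(v)$; every message that fails this test is discarded.

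For the round complexity, the number of legitimate messages crossing a fixed link $(w,x)$ is at most the number of pairs $(u,v)$ with $w \in \comm(u)$ and $x \in \comm(v)$, that is, at most $|\comm^{-1}(w)| \cdot |\comm^{-1}(x)| = \cO(\log^2 n) \cdot \cO(\log^2 n) = \cO(\log^4 n)$ by property 6. As each $m(u,v)$ is an $\cO(\log n)$-bit message and a link transmits $\cO(\log n)$ bits per round, every link clears its load in $\cO(\log^4 n) = \cO(\polylog n)$ rounds, with all links operating in parallel. A Byzantine node cannot defeat this bound: it is physically limited to $\cO(\log n)$ bits per link per round, and every off-schedule message it injects is dropped by the honest receiver, so its only real power is to falsify the contents of the messages it legitimately originates.

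Correctness follows from the majority vote. Fix a destination $v$ and a source $u$: an honest node $x \in \comm(v)$ collects one purported value of $m(u,v)$ from each member of $\comm(u)$. By the input hypothesis the honest members of $\comm(u)$ all hold the same correct value and form a strict majority of $\comm(u)$, so the majority over the $|\comm(u)| = \cO(\log n)$ received copies equals $m(u,v)$ --- and this holds whether or not $u$ is itself Byzantine. I expect the crux of the argument to be precisely this congestion bound: one must confirm that the purely structural degree guarantees of the committee structure force a polylogarithmic per-link load, since it is this fact that lets us dispense with any load-balancing relay step and thereby avoid Byzantine interference on honest-to-honest paths. Given the committee structure, the whole procedure is deterministic and succeeds whenever the bounds of properties 5 and 6 hold, which accounts for the ``w.h.p.'' in the statement.
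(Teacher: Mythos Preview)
The paper does not give its own proof of this proposition; it simply cites \cite{augustine2022byzantine} (Lemmas~13--14 and Algorithms~3--4) as a black box. So there is no in-paper argument to compare against.

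Your direct all-to-all committee broadcast is correct. The key step --- bounding the load on an arbitrary link $(w,x)$ by $|\comm^{-1}(w)|\cdot|\comm^{-1}(x)| = \cO(\log^4 n)$ --- is exactly right, since a message $m(u,v)$ traverses $(w,x)$ only when $u\in\comm^{-1}(w)$ and $v\in\comm^{-1}(x)$; property~6 then gives the polylogarithmic bound, and majority decoding in $\comm(v)$ is sound because the honest members of $\comm(u)$ form a strict majority and all hold the same $m(u,v)$. The fact that the cited source uses two separate algorithms suggests it may route through intermediaries or perform a load-balancing step; your single-hop scheme sidesteps that entirely and is in fact deterministic once the committee structure is given. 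Your remark that the ``w.h.p.''\ is inherited from the construction of $\comm(G)$ rather than from the relay itself is a fair reading of the statement.
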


\subsection{Graph Reconstruction}                                                            
Graph reconstruction is a communication task aimed at enabling each node to learn all the edges of the input graph $G$. Obviously, if the input graph is arbitrary, this task requires communicating many bits. For this reason, the reconstruction task is focused on particular classes of graphs.

Given a class of graphs \(\mathcal{G}\), the problem \(\reco{\mathcal{G}}\) is the following distributed task:

\medskip
\noindent{\bf Input:} A graph \(G = (V,E)\)\\
\noindent{\bf Task:}
\begin{itemize}
\item If \(G\in \mathcal{G}\), then every node outputs \(E\).
\item If \(G\notin \mathcal{G}\), then every node \emph{rejects}. 
\end{itemize}

If we denote by \(\mathcal{G}_n\) the subset of \(\mathcal{G}\) of \(n\)-node graphs, it is easy to see that any algorithm solving \(\reco{\mathcal{G}}\) in the congested clique model requires \(\Omega( \log(|\mathcal{G}_n|)/(n\log n))\) rounds of communication. Montealegre et al. show in  \cite{montealegre2020graph} that the trivial lower-bound is tight (up to logarithmic factors), even for deterministic algorithms. More precisely, the authors of \cite{montealegre2020graph} show the following proposition. 

\begin{proposition}{~\cite{montealegre2020graph}} Let $\mathcal{G}$ be an arbitrary class of graphs. There exists a deterministic algorithm in the congested clique model solving \(\reco{\mathcal{G}}\) in $\mathcal{O}\left(\log(|\mathcal{G}_n|)/(n\log n))\right)$ rounds. 
\label{prop:recon}
\end{proposition}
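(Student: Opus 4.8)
I begin from the observation that, since a \emph{deterministic} algorithm is allowed to depend on the class $\mathcal{G}$, all nodes may share at no communication cost a canonical enumeration $H_1,\dots,H_N$ of the $N=|\mathcal{G}_n|$ graphs in $\mathcal{G}_n$, together with a fixed binary code assigning to each index a string of $\lceil\log N\rceil$ bits. Under this convention, reconstruction reduces to a single agreement task: either all nodes agree on the index $i^*$ with $G=H_{i^*}$, or they jointly detect $G\notin\mathcal{G}_n$ and reject. Once one node knows $i^*$, disseminating it is cheap: broadcasting $b$ bits from a source to everyone can be done in $\cO(b/(n\log n)+1)$ rounds by scattering the bits in chunks across the clique and then performing an all-gather, each step being an instance of Lenzen-style routing that moves $\cO(n^2\log n)$ bits in $\cO(1)$ rounds. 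For $b=\lceil\log N\rceil$ this already matches the target (in the natural regime $\log N=\Omega(n\log n)$; smaller classes are trivially handled in $\cO(1)$ rounds). The whole difficulty is thus concentrated in \emph{computing} $i^*$ within the same budget.

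First I would rule out the two naive strategies, to expose where the real work lies. Gathering the whole adjacency matrix at a leader fails for dense classes: complete bipartite graphs have $\log N=\Theta(n)$ yet $\Theta(n^2)$ edges, forcing $\Theta(n)$ rounds. Having each node independently announce a description of its own adjacency row $r_v$ also fails, because the number of distinct rows a node can exhibit across $\mathcal{G}_n$ may be exponential even when $\log N$ is linear, so the per‑node descriptions do not sum to $\log N$. The idea I would pursue is instead \emph{collaborative and adaptive}: maintain a publicly known candidate set $S\subseteq\mathcal{G}_n$, identical at every node because all refinements are broadcast, starting from $S=\mathcal{G}_n$, and repeatedly reveal only information on which candidates of $S$ still disagree, pruning $S$ until $|S|\le 1$. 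The accounting to aim for is a potential argument on $\Phi(S)=\log_2|S|\le\log N$: each phase should cost $\cO(1)$ clique rounds, hence broadcast only $\cO(n\log n)$ bits, while decreasing $\Phi$ by $\Omega(n\log n)$, so that $\cO(\log N/(n\log n))$ phases suffice and the total communication, $\cO(n\log N)$ bits, is exactly what the clique provides in that many rounds.

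To realize one phase I would, knowing $S$, fix for each node $v$ a map $\sigma_v$ sending $r_v$ to $\cO(\log n)$ bits, and have every node broadcast $\sigma_v(r_v)$; the resulting tuple ($\cO(n\log n)$ bits) lets all nodes simultaneously replace $S$ by the sub‑family consistent with the observed sketch. The main obstacle, and the technical heart of the proof, is to guarantee that such a cheap product sketch always prunes $S$ by the required factor $2^{\Omega(n\log n)}$, i.e. that there exists $(\sigma_1,\dots,\sigma_n)$ whose fibers over $S$ are near‑balanced. This cannot be argued edge by edge: a single edge may be present in almost all candidates and so prune negligibly, as the family of graphs each differing from a common base in one distinct edge shows; balance must be extracted from the tuple as a whole. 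I expect to establish existence of a near‑balanced product sketch by a counting/probabilistic‑method argument over the choices of the $\sigma_v$, and, since the algorithm may be non‑uniform in $\mathcal{G}$, the deterministic nodes simply hard‑wire for each reachable $S$ the sketch guaranteed to exist. Routing the $n$ sketch values and the public recomputation of $S$ are again $\cO(1)$‑round all‑gathers.

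Finally I would close the argument with termination and verification. The refinement halts once $|S|\le 1$, after $\cO(\log N/(n\log n))$ phases by the potential bound. If $S=\emptyset$, all nodes reject. If $S=\{H\}$, the nodes must still check $G=H$, because $G$ may lie outside $\mathcal{G}_n$ and a lossy sketch can spuriously retain a wrong candidate: each node compares its true row $r_v$ with the corresponding row of the public graph $H$, forms a local equality bit, and the network computes the conjunction of these bits by an $\cO(1)$‑round convergecast. If the conjunction holds, every node outputs $E(H)$; otherwise every node rejects. Correctness is then immediate: for $G\in\mathcal{G}_n$ the true graph survives every pruning and passes verification, whereas for $G\notin\mathcal{G}_n$ either $S$ becomes empty or the surviving candidate fails verification. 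The round count matches the claimed $\cO(\log(|\mathcal{G}_n|)/(n\log n))$.
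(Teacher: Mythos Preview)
This proposition is not proved in the present paper; it is quoted from \cite{montealegre2020graph} and used as a black box, so there is no in-paper argument to compare your attempt against.

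As a standalone reconstruction of the cited result, your overall framework is the natural one: maintain a public candidate set $S\subseteq\mathcal{G}_n$, shrink it by a factor $2^{\Omega(n\log n)}$ in each $\cO(1)$-round phase via per-node $\cO(\log n)$-bit sketches, then verify the surviving candidate locally. The termination, broadcast, and verification pieces are correct as stated.

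The gap is exactly where you yourself locate it. You assert, but do not prove, that for every $S$ there exist row-local maps $(\sigma_v)_v$ whose product partitions $S$ into fibers each of size at most $|S|/2^{\Omega(n\log n)}$, and you defer this to ``a counting/probabilistic-method argument''. That one line is not a proof, and the obvious instantiation does not obviously deliver the claim: if each $\sigma_v$ is an independent random hash into $[n^c]$, then two graphs differing in a single edge differ in only two rows and hence collide with probability $n^{-2c}$. This controls the \emph{expected} size of the fiber through a fixed $G$, but not the \emph{maximum} fiber size; a union bound over all $G\in S$ is hopelessly loose, and getting the maximum below $|S|/2^{\Omega(n\log n)}$ requires at least a colliding-pairs/second-moment argument together with a structural bound on how many members of $S$ can lie at small row-Hamming distance from a fixed graph. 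None of that analysis appears in your write-up. As it stands you have reduced the proposition to a lemma of essentially the same difficulty rather than proved it.
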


\section{Recognizing Hereditary Graph Classes}

In the following, we fix an arbitrary hereditary graph class \(\mathcal{G}\), and consider \(G = (V,E)\) an instance of \(\byzantinereco{\mathcal{G}}(|B|)\). Let us assume that we have computed a committee structure \(\comm(G) =\{\comm(u)\}_{u\in V}\) of \(G\) according to Proposition~\ref{TEO-Comm}.


\subsection{The Agreement and Disagreement Edges}

The committee structure \(\comm(G)\) induces two sets of edges over \(V\). The set of \emph{agreement edges} of $G$ is the set of edges for which the committees of its endpoints agreed about its existence. More formally,
$$\mathcal{A} = \left\{ uv \in \binom{[n]}{2} \:\:|\:\: c_{u}(v) = c_{v}(u) = 1\right\}$$ 
The set of agreement edges is not necessarily a subset of \(E\). Indeed, two non-adjacent Byzantine nodes could trick their committees and make them believe that there is an edge between them. 
Similarly, we define the set of \emph{disagreement edges} of $G$ as the edges for which there is no agreement on the existence of the edge between the two committees. Formally, 
$$\mathcal{D}= \left\{ uv \in \binom{[n]}{2} \:|\: c_u(v) \neq c_v(u)\right\}.$$
Observe that at least one endpoint of a disagreement edge must be Byzantine. 

Then, the \emph{agreement graph} of $G$ and \(\comm(G)\) is the graph given by:
$$G_{\mathcal{A}} = (V,\mathcal{A});$$
and the \emph{disagreement graph} of \(G\) and \(\comm(G)\) is the graph:
$$H_{\mathcal{D}} = (V,\mathcal{D}).$$


While the nodes have no idea which agreement edges are real, they have clear suspicion about the disagreement edges. 


Our algorithm consists in two parts. In the first part, we communicate to every honest node the set of agreement and disagreement edges, namely \(\mathcal{A}(G)\) and \(\mathcal{D}(G)\). Then, each node locally computes the distance of \(\mathcal{G}\) to the hereditary class via a procedure based in brute force.


\subsection{\texorpdfstring{Measuring Gap to $\mathcal{G}$}{Measuring Gap to G}}

If an honest node has knowledge of both the agreed and the disagreement graphs $G_{\mathcal{A}}$ and $G_{\mathcal{D}}$, it suffices for it to perform the following procedure. First, it verifies whether the agreement graph $G_{\mathcal{A}}$ belongs to the hereditary class \(\mathcal{G}\); if so, it accepts. Otherwise, it selects a subset of nodes $S$ (which we assume are Byzantine nodes) such that:

\begin{itemize}
\item $|S| = |B|$
\item Every edge in $\mathcal{D}$ has at least one endpoint incident to $S$.
\end{itemize}

For this subset, it tests different combinations of incident edges, in other words, try any subset $F \subseteq S \times V$ and tests if $F \cup (G_{\mathcal{A}} \backslash (S \times V)) = G^F$ lies in the class, if $G^F\in \mathcal{G}$, it accepts; otherwise, it proceeds to another combination of edges $F$, and so on until all possibilities are exhausted. If none leads to acceptance, it selects another set of suspicious nodes $S$ and repeats the process by checking all possible incident-edge combinations for that set. In other words, this procedure assumes that all suspicious nodes are Byzantine and systematically verifies all possible misreports they could have made, accepting if a configuration $G^F$ belonging to the class is found, and rejecting if none exists after exhausting all possibilities.

The second property imposed on the set \(S\) is referred to as the \emph{consistency property}. This property essentially prevents the selection of an impossible suspicious set, namely one that would imply the existence of more than \(|B|\) Byzantine nodes. Indeed, if the chosen set \(S\) leaves some disagreement edges uncovered, then by the properties of a committee structure we know that at least one endpoint of each such edge must be Byzantine. Hence, the selected set \(S\) would necessarily assume the existence of at least \(|B|+1\) Byzantine nodes.

We deduce the following lemma:

\begin{lemma}\label{lem:measuringgap}
Given a committee structure \(\comm(G)\) of a graph \(G\), there exists a locally computable algorithm such that, if a node has knowledge of the sets \(\mathcal{A}\) and \(\mathcal{D}\), it is able to perform the following:
\begin{itemize}
    \item If \(G \in \mathcal{G}\), then the node outputs \emph{accept}.
    \item If \(G\) is \(|B|\)-far from \(\mathcal{G}\), then the node outputs \emph{reject}.
\end{itemize}
\end{lemma}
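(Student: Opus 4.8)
The plan is to prove the lemma by exhibiting the brute‑force local procedure sketched before the statement and verifying its two guarantees; the real content is a short double‑counting bound for the rejecting case. Concretely, a node that knows the agreement/disagreement data first tests whether \(G_{\mathcal A}\in\mathcal G\) and accepts if so. Otherwise it enumerates every \emph{suspicious set} \(S\subseteq V\) with \(|S|=|B|\) that is a vertex cover of the disagreement graph \(H_{\mathcal D}\), and, for each such \(S\) and each graph \(F\) on the vertex set \(S\), it forms the graph \(G^F\) obtained from \(G_{\mathcal A}\) by keeping every agreement edge with no endpoint in \(S\), replacing the edges inside \(S\) by \(F\), and placing an edge \(sw\) with \(s\in S\), \(w\notin S\) exactly when \(c_w(s)=1\) (i.e.\ trusting the committee of the endpoint \emph{outside} \(S\)); it accepts iff some \(G^F\) lies in \(\mathcal G\). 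This is purely local once the committee structure is known, and \(|B|\) is public, so the enumeration is well defined.

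For the accepting case, suppose \(G\in\mathcal G\). The true Byzantine set \(B\) is one of the suspicious sets the procedure tries: \(|B|=|B|\), and every disagreement edge has a Byzantine endpoint, since by the committee‑structure properties \(c_u(v)=c_v(u)\), both equal to \(1\) exactly when \(uv\in E\), whenever \(u,v\) are honest, so two honest committees never disagree. For \(S=B\) and \(F\) equal to the true edges of \(G\) inside \(B\), the three rules defining \(G^F\) reproduce \(G\) verbatim: the agreement rule is correct off \(B\) because honest pairs, and also honest–Byzantine pairs that do not disagree, carry the correct bit; the outside‑endpoint rule is correct on \(B\)-to-honest edges because \(c_w(s)=1\) iff \(sw\in E\) for honest \(w\); and \(F\) is correct inside \(B\) by construction. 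Hence \(G^F=G\in\mathcal G\) and the node accepts.

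For the rejecting case, suppose \(G\) is \(|B|\)-far, witnessed by pairwise disjoint \(W_1,\dots,W_{|B|+1}\) with \(G[W_i]\notin\mathcal G\), and suppose toward a contradiction that the node accepts. If the acceptance came from \(G_{\mathcal A}\in\mathcal G\): \(G_{\mathcal A}\) and \(G\) differ only on edges incident to \(B\), so \(G_{\mathcal A}[W_i]=G[W_i]\) for any \(W_i\) disjoint from \(B\) (one exists because the \(W_i\) are disjoint and \(|B|=|B|\)), contradicting \(G_{\mathcal A}\in\mathcal G\). So some legal \((S,F)\) has \(G^F\in\mathcal G\). The key observation is that \(G^F\) and \(G\) agree on every pair \(\{p,q\}\) not contained in \(S\cup B\): if \(p,q\notin S\) then \(\{p,q\}\notin\mathcal D\), so \(c_p(q)=c_q(p)\), and if in addition one of \(p,q\) is honest this common bit equals the indicator of \(pq\in E\), so the agreement rule reproduces \(G\); if \(p\in S\), \(q\notin S\) with \(q\) honest, the outside‑endpoint rule restores \(c_q(p)=\mathbf 1\{pq\in E\}\) and again reproduces \(G\). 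Now, since \(\mathcal G\) is hereditary and \(G^F\in\mathcal G\), we have \(G^F[W_i]\in\mathcal G\), whereas \(G[W_i]\notin\mathcal G\); hence \(G^F[W_i]\ne G[W_i]\), so some pair inside \(W_i\) is falsified, and by the observation that pair lies inside \(S\cup B\), whence \(|W_i\cap(S\cup B)|\ge 2\). Summing over \(i\) and using disjointness of the \(W_i\) gives \(|S\cup B|\ge 2(|B|+1)\), contradicting \(|S\cup B|\le |S|+|B|=2|B|\). So the node rejects.

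The main obstacle is the soundness direction, and within it the claim that every falsified pair has both endpoints in \(S\cup B\): this is precisely where all the committee‑structure guarantees must be combined at once (honest nodes report truthfully, honest pairs never disagree, each disagreement edge is Byzantine‑incident, and the rewiring \(F\) only perturbs edges meeting \(S\) while \(S\)-to-outside edges are resolved by the committee of the honest‑looking endpoint). Once that claim is in place the double counting closes the promise gap at the exact value \(f=|B|\): each of the \(|B|+1\) disjoint forbidden subgraphs has to "spend" two of the at most \(2|B|\) corrupted vertices, which is impossible — and this ``\(+1\)'' slack over \(2|B|\) is exactly the regime shown to be optimal in Theorem~\ref{theo:impossibility}.
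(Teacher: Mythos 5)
Your proof is correct, and in the soundness direction it takes a genuinely different --- and more careful --- route than the paper's. Two things differ. First, your procedure is more restrictive than the one sketched in the paper: there, \(F\) ranges over arbitrary subsets of \(S\times V\), so the candidate \(G^F\) may rewire \emph{every} edge incident to \(S\), whereas you only allow free rewiring inside \(S\) and force each \(S\)-to-outside pair to follow the consistency bit of the outside endpoint. Second, your soundness argument is a global double count (every witness \(W_i\) contains a falsified pair, every falsified pair lies inside \(S\cup B\), hence \(|S\cup B|\ge 2(|B|+1)>2|B|\)), while the paper picks a single \(W_k\) disjoint from \(S\) by pigeonhole and argues that a Byzantine vertex of \(W_k\) must be incident to a disagreement edge, violating the cover property of \(S\). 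Your version buys real robustness: the paper's step ``\(d_{G_{\mathcal{D}}}(v)\ge 1\), hence \(S\) fails the consistency property'' tacitly assumes that any distortion of \(G_{\mathcal{A}}[W_k]\) produces a disagreement, which fails when two Byzantine vertices of \(W_k\) collude (e.g., both silently deny a real edge between them: no disagreement arises, yet \(G_{\mathcal{A}}[W_k]\ne G[W_k]\)); your claim that both endpoints of a falsified pair lie in \(S\cup B\) absorbs exactly this case. The restriction you place on \(F\) is also what makes the count close: with the paper's unrestricted \(F\), a falsified pair can have one endpoint in \(S\) and one honest endpoint outside \(S\cup B\), the per-witness bound degrades to \(|W_i\cap(S\cup B)|\ge 1\), and no contradiction follows (indeed, three disjoint triangles with two Byzantine vertices hiding their shared edge in one of them would defeat the unrestricted procedure). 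The one caveat, shared with the paper, is that both arguments use that an honest node's committee holds the \emph{correct} indicator of each incident edge (the paper's ``converse of Condition~3''), which is slightly stronger than Conditions~3 and~4 as literally stated.
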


\begin{proof}

It is clear that if the input graph $G$ belongs to the class \(\mathcal{G}\), then there exists the set $S = B$ of nodes and the subset $F = (S \times V) \cap E$ for which is clear $G^F = G$, and thus the algorithm accepts. By the other hand, assume, for the sake of contradiction, that $G$ is $|B|$-far from $\mathcal{G}$ and that there exists a subset $S \subset V$ satisfying the aforementioned properties, together with a set $F$ for this subset $S$ such that $G^F \in \mathcal{G}$.  

Fixing these sets $S$ and $F$, consider the family of graphs $G_i'$ such that for all $i \in [|B|+1]$ we have $G_i' \in \mathcal{G}$, but $G_i = G[V(G_i')] \notin \mathcal{G}$. Such a family exists because $G$ is $|B|$-far from $\mathcal{G}$ while $G^F \in \mathcal{G}$.  

Now observe that there exists some $k \in [|B|+1]$ such that $S \cap V(G_k') = \emptyset$. This follows directly from the fact that $|S| = |B|$. Consequently, there must exist a vertex $v \in V(G_k')$ such that $v \in B$. Otherwise, every edge $e \in E(G_k')$ would have honest endpoints, and since $S \cap V(G_k') = \emptyset$, by Condition~4 of a committee structure we would obtain $G_k' = G_k$, which is impossible because $G_k' \in \mathcal{G}$ while $G_k \notin \mathcal{G}$.  
Moreover, it must hold that $d_{G_{\mathcal{D}}}(v) \geq 1$, for otherwise, by the converse of Condition~3 of a committee structure, we would again obtain $G_k' = G_k$, contradicting the assumption as before. Finally, since $d_{G_{\mathcal{D}}}(v) \geq 1$, we conclude that $S$ does not satisfy the consistency property. This yields a contradiction, thereby proving that there do not exist $S$ and $F$ such that $G^F \in \mathcal{G}$.
\end{proof}

\begin{figure}[H]
    \centering
        \includegraphics[width=0.3\linewidth]{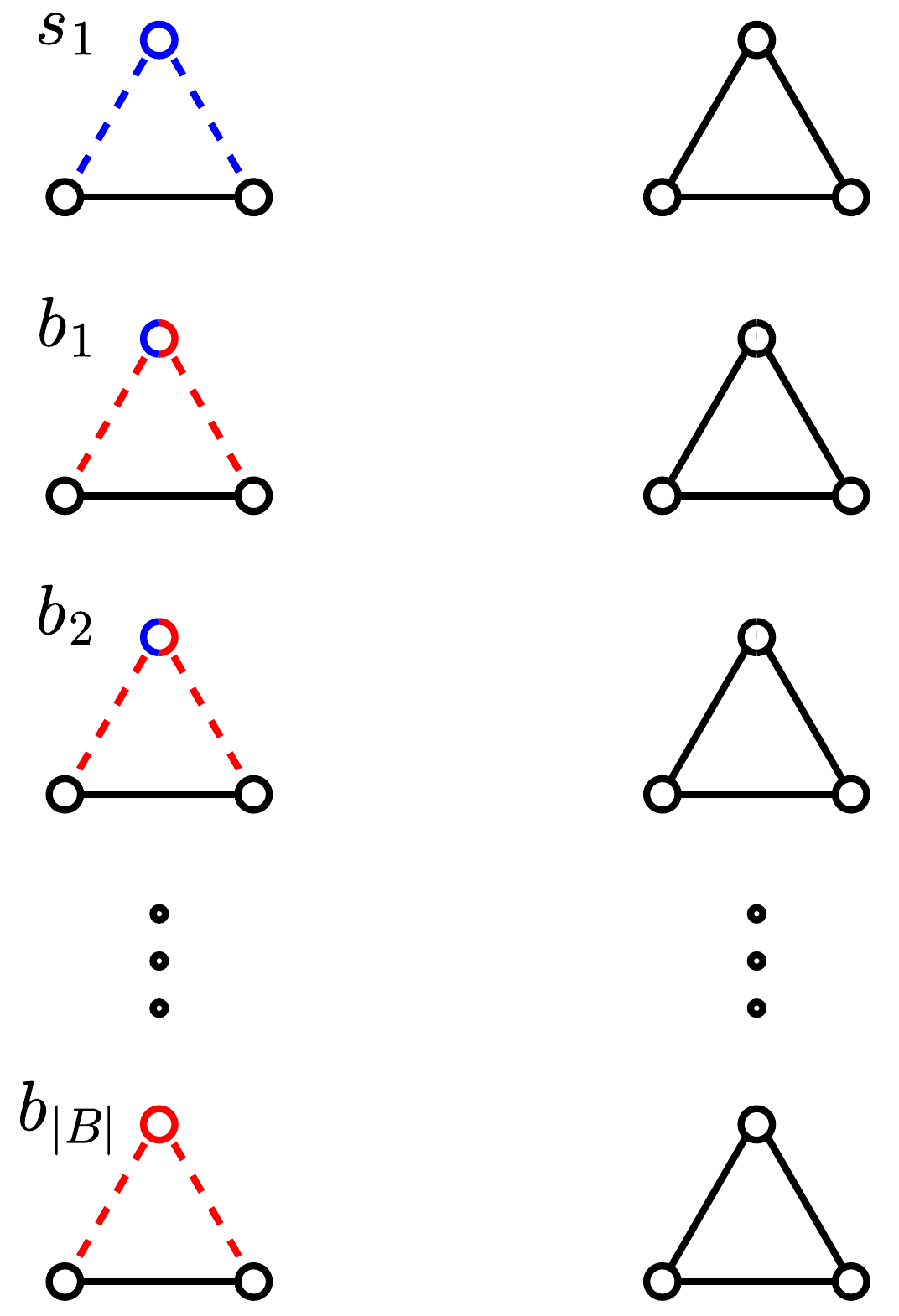}
        \caption{\centering Example of the graph $G$ and the possible graph $G^F$ obtained that belongs to the class of forests.}
        \label{fig:prooflemma1}
\end{figure}

At first glance, it is not clear why the consistency property is needed in $S$, but it is necessary to ensure the correct functioning of the procedure. Indeed, let us consider that the consistency property is not required to select a set $S$. Then, take $G = \cup_{i=1}^{|B|+1} T_i \quad \text{where} \quad T_i = (\{v_1^i, v_2^i, v_3^i\}, \{\{v_1^i,v_2^i\}, \{v_2^i, v_3^i\}, \{v_3^i,v_1^i\}\})$. Clearly $G \in \mathcal{T}$, the hereditary class of forests. Next, take $B = \{ v_2^i\}_{i>1}$ and $\mathcal{D} = \{\{v_1^i,v_2^i\}, \{v_2^i, v_3^i\}\}_{i>1}$. Then, by taking $S=\{v_2^i\}_{i<|B|+1}$ and $F=\{\emptyset\}$, we obtain $G^F = \cup_{i=1}^{|B|+1} \big(\{v_1^i, v_2^i, v_3^i\}, \{ \{v_3^i, v_1^i\}\}\big)$ which is clearly in $\mathcal{T}$ (See Figure \ref{fig:prooflemma1}). Thus, the procedure would accept for this set $S$ and $F$. However, $G$ is $|B|+1$-far from $\mathcal{T}$, failing to solve the promise problem. Thus showing the importance of using the information contained in the disagreement graph.

\subsection{Reconstruction of the Agreement and Disagreement Graphs}

The Measuring Gap to $\mathcal{G}$ introduced in the previous subsection effectively differentiates between the \emph{yes} and \emph{no} instances of the given promise problem. Our current objective is to equip the honest nodes with sufficient information to execute it. Note that by the given Lemma~\ref{lem:measuringgap} knowing both the agreement edges \(\mathcal{A}\)
and the disagreement edges \(\mathcal{D}\) is sufficient for this computation. Consequently, our primary goal becomes ensuring that each node receives all information about both types of edges. This dissemination of information can be efficiently executed using reconstruction algorithms applied to both the agreement and disagreement graphs. Since these graphs exhibit distinct characteristics, they require separate approaches in their treatment.

\subsubsection{Reconstruction of the Disagreement Graph} Observe that there are no disagreement edges between honest nodes (condition 4 of the definition of the committee structure). Therefore, the honest nodes have at most \(|B|\) neighbors in the disagreement graph. In other words, any node of degree greater than \(|B|\) in the disagreement graph is necessarily Byzantine. Let us call \(d_{\mathcal{D}}(u)\) the degree of \(u\) in the disagreement graph \(H_\mathcal{D}\). We partition \(V\) into two sets, the set \(V^+\) containing the nodes \(u\) such that \(d_{\mathcal{D}(u)} > |B|\), and  \(V^- = V\setminus V^+\). 

We now describe the reconstruction algorithm. 
At the beginning, the nodes in \(\comm(u)\) broadcast \(d_{\mathcal{D}}(u)\) to every other node, for each \(u\in V\). By a majority vote over \(\comm(u)\), every node learns \(d_{\mathcal{D}}(u)\). This procedure takes one round. 

Then, for each \(u \in V^-\) the nodes in \(\comm(u)\) simply broadcast the list of edges incident to \(u\). By the definition of \(V^-\), this list contains at most \(|B|\) edges. Again, by a majority vote over \(\comm(u)\), every honest node learns the disagreement edges incident to \(u\). Moreover, every node also learns all the disagreement edges with one endpoint in \(V^-\) and the other in \(V^+\). It remains to communicate the edges with both endpoints in \(V^+\). For each \(u \in V^+\), the nodes in \(\comm(u)\) broadcast the list of disagreement edges incident to \(u\) with the other endpoint also in \(V^+\). Since the nodes in \(V^+\) are Byzantine, \(u\) has at most \(|B|\) such edges. Once again, by majority vote over the respective committees, every node learns the disagreement edges with both endpoints in \(V^+\).

By definition of the committee structure (condition 6), each node \(v\) participates in \(\cO(\log^2 n)\) committees (i.e. \(|\comm^{-1}(v)| = \cO(\log^2)\)). For each \(u\in \comm^{-1}(v)\) node \(v\) has to communicate at most \(|B|\) edges, using at most \(|B|\) rounds. Then, the whole reconstruction algorithm takes \(\cO(|B|\log^2(n))\) rounds.  We deduce the following lemma.

\begin{lemma}\label{lem:disagrementreco} Given a committee structure \(\comm(G)\) of a graph \(G\), there is an algorithm in the Byzantine congested clique model in which every honest node outputs the set of all disagreement edges \(\mathcal{D}\). The algorithm runs in
$$\cO\left( |B|\log^2(n)\right)\textrm{ rounds.}$$ 
\end{lemma}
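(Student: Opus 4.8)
The plan is to let the committee of each node broadcast the list of its incident disagreement edges, after first partitioning the vertices by their disagreement degree so that no committee is ever asked to broadcast more than $\cO(|B|)$ edges. The only structural fact needed is this: if $u$ and $v$ are both honest, then Condition~4 of a committee structure gives $c_u(v) = 1 \iff uv \in E \iff c_v(u) = 1$, hence $c_u(v) = c_v(u)$ and $uv \notin \mathcal{D}$. So every edge of $\mathcal{D}$ has a Byzantine endpoint, and every honest node has disagreement degree at most $|B|$. Accordingly I set $V^+ = \{\, u \in V : d_{\mathcal{D}}(u) > |B| \,\}$ and $V^- = V \setminus V^+$, so that $V^+ \subseteq B$ and in particular $|V^+| \le |B|$.

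The algorithm then runs in three phases, each ending with a per-committee majority vote; the votes are meaningful because a strict majority of every committee is honest and, by committee property~3 together with Condition~2 of a committee structure (and, if needed, one invocation of \textsc{Committee Relay} to disseminate the bits $c_v(u)$), the honest members of $\comm(u)$ hold and agree on the set of disagreement edges incident to $u$. In Phase~1 each $\comm(u)$ broadcasts $d_{\mathcal{D}}(u)$, so that afterwards every honest node knows all disagreement degrees and hence the partition $V = V^+ \cup V^-$. In Phase~2, for each $u \in V^-$ the committee $\comm(u)$ broadcasts the at most $|B|$ disagreement edges incident to $u$. In Phase~3, for each $u \in V^+$ the committee $\comm(u)$ broadcasts those disagreement edges incident to $u$ whose other endpoint also lies in $V^+$, of which there are at most $|V^+| \le |B|$. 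Correctness is then a one-line case analysis: an arbitrary edge of $\mathcal{D}$ either has an endpoint in $V^-$, in which case it is recovered in Phase~2, or has both endpoints in $V^+$, in which case it is recovered in Phase~3; so after the three phases every honest node holds all of $\mathcal{D}$.

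For the round complexity I use that each node lies in $\cO(\log^2 n)$ committees (Condition~6). In Phase~1 a node broadcasts $\cO(\log^2 n)$ numbers of $\cO(\log n)$ bits, which costs $\cO(\log^2 n)$ rounds. In Phases~2 and~3 a node broadcasts, summed over all committees it belongs to, $\cO(|B|\log^2 n)$ edges of $\cO(\log n)$ bits each, which at bandwidth $\cO(\log n)$ per round costs $\cO(|B|\log^2 n)$ rounds. Adding the three phases gives the claimed $\cO(|B|\log^2 n)$.

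The step I expect to be the delicate one is Phase~3: I have to guarantee that the committee of a Byzantine node in $V^+$ cannot broadcast --- whether by the node's own choice or because its committee holds an adversarial view --- more than $\cO(|B|)$ edges. Restricting that phase to edges whose other endpoint is also in $V^+$ is exactly what enforces the bound, since any such endpoint is necessarily Byzantine; the remaining $\mathcal{D}$-edges incident to $V^+$ have their other endpoint in $V^-$ and so have already been captured from the $V^-$ side in Phase~2. A secondary point to check carefully is the one flagged above --- that the honest members of each committee both possess and agree on the edge set they broadcast --- which is what committee property~3 and Condition~2 of the committee structure (with \textsc{Committee Relay} if needed) are there to provide.
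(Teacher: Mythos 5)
Your proposal is correct and follows essentially the same route as the paper: the same partition into $V^+$ and $V^-$ by disagreement degree, the same three broadcast-and-majority-vote phases, and the same complexity accounting via the $\cO(\log^2 n)$ bound on committee memberships. The only difference is that you spell out more explicitly why the honest committee members agree on what to broadcast, which the paper leaves implicit.
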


\subsubsection{Reconstruction of the Agreement Graph} The reconstruction of the agreement graph is simpler, as we use the reconstruction algorithm of Montealegre et al. given in Proposition~\ref{prop:recon}. We denote by \(\mathcal{G}^{|B|}\)  the class of all possible agreement graphs obtained from a graph \(G\in \mathcal{G}\) with \(|B|\) Byzantine nodes. 

For each  \(n\)-node graph \(G\in \mathcal{G}_n\), the adversary has   \(\binom{n}{|B|}\) possible choices of where to place the Byzantine nodes. Each Byzantine node can create \((n-|B|)\) disagreement edges with honest nodes, and an arbitrary subgraph with the other \(|B|-1\) Byzantine nodes. 
Therefore, the cardinality of \(\mathcal{G}^{|B|}_n\) is

$$| \mathcal{G}^{|B|}_n| \leq  |\mathcal{G}_n|\cdot \binom{n}{|B|} \cdot 2^{|B|(n-|B|)} \cdot 2^{\binom{|B|}{2}}\leq  |\mathcal{G}_n|\cdot \binom{n}{|B|} \cdot 2^{|B|n}. $$
Then, Proposition~\ref{prop:recon} provides an algorithm \(\textsf{Algo}\) for \(\reco{\mathcal{G}^{|B|}}\) in the Congested Clique model running in 

$$\cO\left( \frac{\log|\mathcal{G}_n|}{n\log n} + \frac{|B|}{\log n}\right)\textrm{ rounds.}$$

Observe that when the input graph \(G\) belongs to \(\mathcal{G}\) the output of the nodes in \(\textsf{Algo}\) is the set of edges of some realization of the agreement graph. Conversely, if the nodes output \emph{reject} we necessarily have that \(G\notin \mathcal{G}\). When the input graph is not in \(\mathcal{G}\), the output of \(\textsf{Algo}\) is  either the set of edges of an agreement graph of \(G\), or \emph{reject}.

Using Proposition~\ref{prop:comm}, we simulate \(\textsf{Algo}\) through the committee structure. In this way we obtain an algorithm in the Byzantine congested clique model, where every honest node outputs the value of  \(\reco{\mathcal{G}^{|B|}}\). More precisely, either every honest node recovers all the edges in the agreement graph of the input graph, or every honest node \emph{rejects}. We deduce the following lemma.

\begin{lemma}\label{lem:agrementreco} Given a committee structure \(\comm(G)\) of a graph \(G\), there is an algorithm in the Byzantine Congested Clique model in which every honest node either outputs the set of all agreement edges \(\mathcal{A}\), or \emph{rejects}. When the output of the honest nodes is \emph{reject}, the input graph \(G\) is not in \(\mathcal{G}\). The algorithm runs in
$$\cO\left(\left( \frac{\log|\mathcal{G}_n|}{n} + |B|\right)\cdot \polylog(n)\right)\textrm{ rounds.}$$ 
\end{lemma}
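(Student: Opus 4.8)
The plan is to reduce the task to the generic reconstruction result of Proposition~\ref{prop:recon}, applied to a suitably enlarged class, and then to lift the resulting (Byzantine-free) congested-clique algorithm to the Byzantine model by simulating it over the committee structure via Committee Relay (Proposition~\ref{prop:comm}).

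First I would formalize the class $\mathcal{G}^{|B|}$ mentioned above: the set of all graphs that arise as the agreement graph $G_{\mathcal{A}}$ of some $G\in\mathcal{G}$ under some placement of at most $|B|$ Byzantine nodes. This definition is tailored so that (i) if $G\in\mathcal{G}$ then $G_{\mathcal{A}}\in\mathcal{G}^{|B|}$, and (ii) if $G_{\mathcal{A}}\notin\mathcal{G}^{|B|}$ then $G\notin\mathcal{G}$. Next I would bound $|\mathcal{G}^{|B|}_n|$ by the counting argument already sketched — choose the at most $|B|$ Byzantine positions ($\binom{n}{|B|}$ ways), choose which Byzantine--honest pairs carry an agreement edge ($2^{|B|(n-|B|)}$ ways), and choose an arbitrary graph on the Byzantine side ($2^{\binom{|B|}{2}}$ ways) — giving $|\mathcal{G}^{|B|}_n|\le|\mathcal{G}_n|\cdot\binom{n}{|B|}\cdot 2^{|B|n}$ and hence $\log|\mathcal{G}^{|B|}_n|=\cO(\log|\mathcal{G}_n|+|B|n)$. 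Feeding this into Proposition~\ref{prop:recon} produces a deterministic congested-clique algorithm $\textsf{Algo}$ for $\reco{\mathcal{G}^{|B|}}$ of round complexity $\cO\left(\frac{\log|\mathcal{G}_n|}{n\log n}+\frac{|B|}{\log n}\right)$.

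I would then check correctness of $\textsf{Algo}$ when it is run on the agreement graph $G_{\mathcal{A}}$: by (i), if $G\in\mathcal{G}$ every node of $\textsf{Algo}$ outputs $\mathcal{A}$; otherwise $\textsf{Algo}$ either still outputs $\mathcal{A}$ (when $G_{\mathcal{A}}$ happens to lie in $\mathcal{G}^{|B|}$) or rejects, and by (ii) a rejection certifies $G\notin\mathcal{G}$ — which is precisely the dichotomy in the statement.

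The last and most delicate step is to run $\textsf{Algo}$ over the committee structure. Here $\textsf{Algo}$ is a protocol for a Byzantine-free clique whose input graph is $G_{\mathcal{A}}$, so the virtual node $v$ simulated by $\comm(v)$ must behave according to the set $\{u:c_u(v)=c_v(u)=1\}$, which no single honest node knows outright. I would first have each committee $\comm(v)$ assemble this set: its honest members already agree on $c_v(\cdot)$, and one invocation of Committee Relay carrying the message $m(u,v):=c_u(v)$, followed by a majority vote inside each sending committee, lets them also learn $c_u(v)$ for every $u$. With the virtual inputs installed, each of the $\cO(\log|\mathcal{G}_n|/(n\log n)+|B|/\log n)$ communication rounds of $\textsf{Algo}$ is carried out by one further invocation of Committee Relay, the honest majority of every committee overriding any Byzantine deviation and keeping the honest members of $\comm(v)$ in agreement on $v$'s state; at termination every committee — hence every honest node, from the committees it belongs to — holds the common output of $\textsf{Algo}$, either $\mathcal{A}$ or reject. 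Since each simulated round costs $\polylog(n)$ rounds (and each node lies in only $\cO(\log^2 n)$ committees, so relays only a $\polylog(n)$-factor of extra traffic per round), the total is $\cO\left(\left(\frac{\log|\mathcal{G}_n|}{n}+|B|\right)\cdot\polylog(n)\right)$. I expect this simulation — verifying that Byzantine committee members cannot corrupt the execution of $\textsf{Algo}$ and that the virtual input really is $G_{\mathcal{A}}$ — to be the part requiring the most care; the counting step and the appeal to Proposition~\ref{prop:recon} are routine.
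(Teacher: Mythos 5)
Your proposal follows essentially the same route as the paper: define the enlarged class \(\mathcal{G}^{|B|}\) of possible agreement graphs, bound \(|\mathcal{G}^{|B|}_n|\le|\mathcal{G}_n|\cdot\binom{n}{|B|}\cdot 2^{|B|n}\), invoke Proposition~\ref{prop:recon} to get \(\textsf{Algo}\), and simulate it over the committee structure via Proposition~\ref{prop:comm}. The extra care you devote to installing the virtual input \(G_{\mathcal{A}}\) in the committees is a welcome elaboration of a step the paper leaves implicit, but it does not change the argument.
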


\subsection{Bringing It All Together}

We now have all the ingredients for proving Theorem~\ref{theo:main}. 

\main*

\begin{proof}
An algorithm for  \(\byzantinereco{\mathcal{G}(|B|)}\) consists in the following steps, described also in Algorithm~\ref{algo:main}. First, we compute a committee of leaders \(L\) using the algorithm given by Proposition~\ref{L1}. 

For each \(\ell \in L\), we execute the algorithm of Proposition~\ref{TEO-Comm}  for building a committee structure of the input graph \(G\) associated to \(\ell\) (i.e. we form \(|L| = \cO(\log n)\) committee structures of \(G\), one for each leader \(\ell \in L\). Each committee structure defines a set of agreement and disagreement edges. Then, the nodes try to reconstruct these sets of edges using the algorithms of Lemma~\ref{lem:disagrementreco} and \ref{lem:agrementreco}. 

If an honest node cannot reconstruct the agreement graph, it sends \emph{reject} to leader \(\ell\). Otherwise, every honest node knows all the edges in the agreement and disagreement graphs. Using that information the nodes can locally compute the Measuring Gap to $\mathcal{G}$ procedure, finally the nodes communicate \emph{accept} to the leader \(\ell\) if the procedure end in acceptance. Otherwise, the nodes communicate \emph{reject} to \(\ell\).

Each leader \(\ell \in L\) then receives the decision of the honest nodes, and adopts the majority decision. Then it communicates that decision to every other node of the graph. Finally, the nodes output the decision of the majority of the leaders. 

 \begin{algorithm}[h]
	\caption{Algorithm for \(\byzantinereco{\mathcal{G}(|B|)}\) from the perspective of an honest node \(v\)}
 \label{algo:main}
    \begin{flushleft}
        \textbf{Require:} An input graph $G = (V,E)$\\
        \textbf{Ensure:} When \(G\in \mathcal{G}\) every honest node accepts. When \(G\) is \(|B|\)-far from \(\mathcal{G}\) every hones node rejects. 
    \end{flushleft}
	\begin{algorithmic}[1]
    \State Invoke the algorithm from Proposition~\ref{L1} and obtain a committee of leaders \(L\). \newline/* \(v\) learns the identities of all vertices in \(L\). */
    \For{ $\ell \in L$}
 
    \State Invoke the algorithm from Proposition~\ref{TEO-Comm} and obtain \(\comm(G)\) of \(G\). \newline
     /*  \(v\) knows the identities of all the members of \(\comm(u)\) for every \(u\in V\). Node \(v\) also knows all the consistency bits, the agreement and disagreement edges incident to the nodes \(u\in \comm^{-1}(v)\). */

        \State Invoke the algorithm from Lemma~\ref{lem:disagrementreco}  to reconstruct $\mathcal{D}$. \newline
        /*  \(v\) learns all the edges in \(\mathcal{D}\).*/ 
        
        \State Invoke the algorithm from Lemma~\ref{lem:agrementreco}  to try to reconstruct $\mathcal{A}$. 
         
        \If { the output of the algorithm of  Lemma~\ref{lem:agrementreco} is \emph{reject}}
        \State Send \emph{reject} to $\ell$. 
        \Else ~/*\(v\) learns all the edges in \(\mathcal{A}\)*/
        \State \(v\) locally computes the Measuring Gap to \(\mathcal{G}\) procedure using \(\mathcal{A}\) and \(\mathcal{D}\).
        \If {the Measuring Gap to \(\mathcal{G}\) procedure accepts}
        \State Send \emph{accept} to $\ell$. 
        \Else
        \State Send \emph{reject} to $\ell$. 
        \EndIf
        \EndIf
    \EndFor  
    \If{ $v\in L$}
    \State Pick the most repeated decision communicated to \(v\) as a leader
    \State Communicate that decision to every other node in \(V\). 
    \EndIf
    \State Output the decision communicated by the majority of the nodes in \(L\). 
    \end{algorithmic} 
    \end{algorithm}

The correctness of the subroutines are given by the correctness of the algorithms of Proposition~\ref{L1} and \ref{TEO-Comm}, that obtain a committee of leaders and committee structures with high probability. Lemma~\ref{lem:disagrementreco} and \ref{lem:agrementreco} also provide the correctness of the reconstruction algorithms for the agreement and disagreement graphs. Observe that when the agreement graph is not reconstructed, necessarily \(G\notin \mathcal{G}\), and then the honest nodes should reject. When the agreement graph can be successfully reconstructed, we have by Lemma~\ref{lem:measuringgap} that the result of the Measuring Gap to $\mathcal{G}$ procedure allows the honest nodes to distinguish the instances where \(G\in \mathcal{G}\) from the instances \(|B|\)-far from \(\mathcal{G}\). We deduce that Algorithm~\ref{algo:main} successfully solves \(\byzantinereco{\mathcal{G}(|B|)}\).

With respect to the complexity, the number of rounds is dominated by the subroutine of Lemma~\ref{lem:agrementreco}, used to try to reconstruct the agreement graph. This algorithm runs in

$$\cO\left(\left( \frac{\log|\mathcal{G}_n|}{n} + |B|\right)\cdot \polylog(n)\right)\textrm{ rounds.}$$ 

Since we have to run this algorithm for each \(\ell \in L\) and \(|L|=\cO(\log n)\), we obtain that the algorithm runs in

$$\cO\left(\left( \frac{\log|\mathcal{G}_n|}{n} + |B|\right)\cdot \polylog(n)\right) \textrm{  rounds in total.}$$
\end{proof}

\section{Impossibility of Forest Recognition }

In this section we prove Theorem~\ref{theo:impossibility}.


\secondmain*

\begin{proof}

Suppose that there exists an algorithm $\textsf{ALG}$   for solving {\sc Byzantine-Recognition-of-Forests}\((f)\) for some 
\(f< |B|\). We are going to contradict the correctness of \(\textsf{ALG}\) by defining two instances, namely \(G_{Yes}\) and \(G_{No}\), such that \(G_{\text{Yes}}\) is a forest while \(G_{No}\) is \(f\)-far from being a forest. In each of these instances, we will place the Byzantine nodes and ask them to lie in a specific way, so that from the perspective of the honest nodes, the instances are indistinguishable.

A \emph{triangle} \(T\) is a graph of three vertices, containing all the edges between these vertices: $T = \{v_1, v_2, v_3\}, \{\{v_1,v_2\}, \{v_1, v_3\}, \{v_2,v_3\}\}$.

The graph \(G_{No}\) is simply defined as the disjoint union of \(f+1\) triangles, denoted \(T_1,\dots, T_{f+1}\).  Observe that \(G_{\text{No}}\) contains \(f+1\) disjoint cycles, hence it is \(f\)-far from the class of forests. Let us denote \(\{v_1^i, v_2^i, v_3^i\}\) the vertex set of \(T_i\). The graph \(G_{\text{Yes}}\) is obtained from \(G_{\text{No}}\) by removing edge \(\{v_1^i, v_2^i\}\) from each triangle \(T_i\) (See Figure \ref{fig:proofteo2a}). Observe that each component of \(G_{\text{Yes}}\) is a three node path. Therefore, \(G_{\text{Yes}}\) belongs to the class of forests. 

For each \(k \in \{1,2\}\), let us denote \(G^k_{\text{Yes}}\) (respectively \(G^k_{\text{No}}\)), the instance of {\sc Byzantine-Recognition-of-Forests}\((f)\) when the input graph is \(G_{\text{Yes}}\) (respectively \(G_{\text{No}}\)) and a set of \(f+1 \leq |B|\) Byzantine nodes are located in vertices \(v^i_k\), for each \(i\in \{1, \dots, f+1\}\). 

In the instances \(G_{\text{Yes}}^k\), the Byzantine nodes pretend as if the edge \(\{v^i_1,v^i_2\}\) exists, and apart from that, they behave just like an honest node. In the instances \(G_{\text{No}}^k\), the Byzantine nodes pretend as if the edge \(\{v^i_1,v^i_2\}\) does not exist, and apart from that, they behave just like an honest node.

We obtain that  \(G_{\text{Yes}}^1\) is indistinguishable form \(G_{\text{No}}^2\) (See Figure \ref{fig:proofteo2b}): for each \(i \in \{1,\dots, f+1\}\) the view of \(v^i_3\) is the same on both graphs, node \(v_1^i\) claims that the edge \(\{v^i_1, v^i_2\}\) does exist and node \(v_2^i\) claims that the edge \(\{v^i_1, v^i_2\}\) does not exist. Since the nodes of each graph execute the same algorithm under the same local input, the global algorithm $\textsf{ALG}$ executed by the two instances are statistically identical. However, $\textsf{ALG}$ must be able to distinguish between \(G_{\text{Yes}}^1\) and \(G_{\text{No}}^2\) with probability greater than \(1/2\), which is a contradiction. 
\end{proof}

\begin{figure}[H]
    \centering
    \begin{subfigure}{0.3\textwidth}
        \centering
        \includegraphics[width=\linewidth]{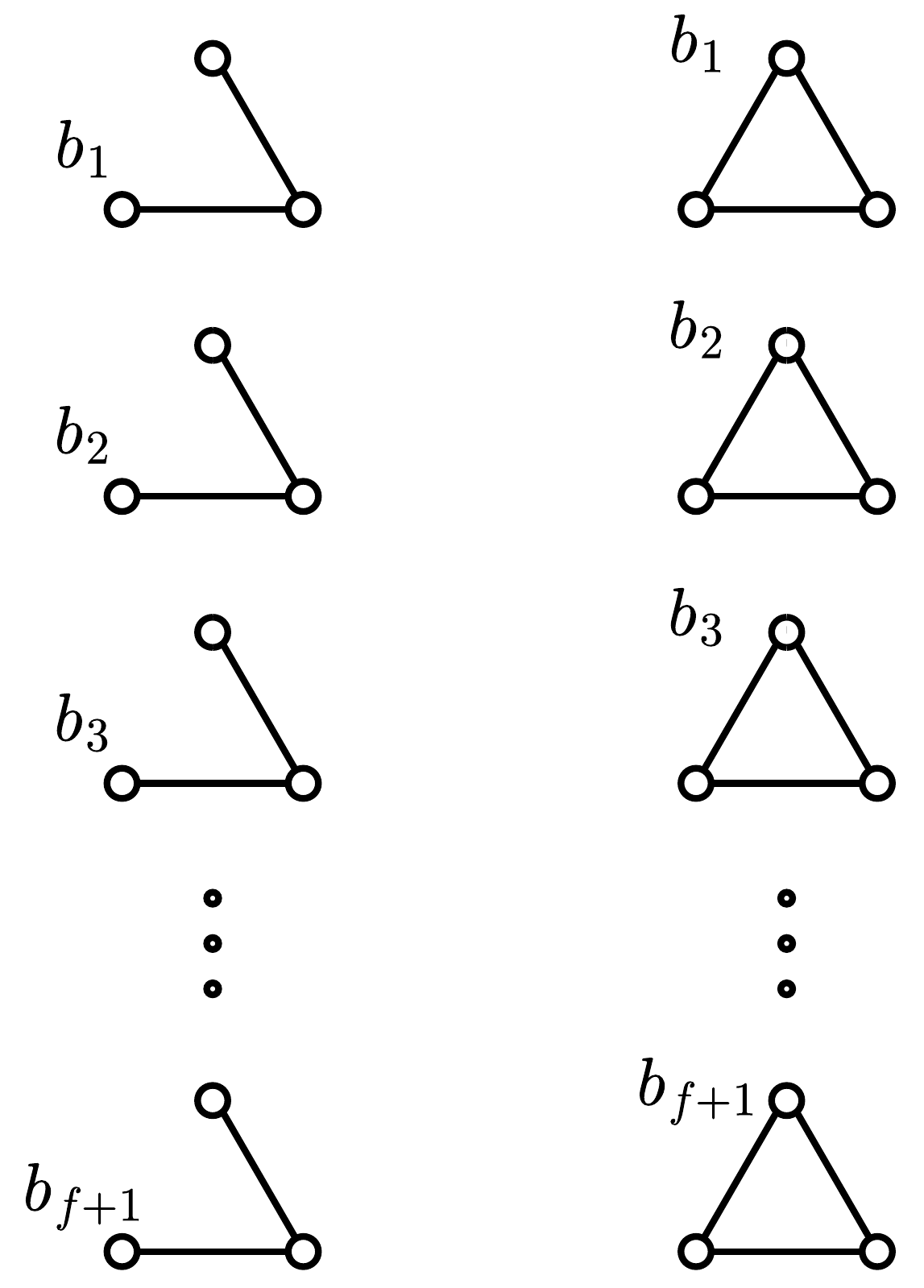}
        \caption{\centering \(G_{\text{Yes}}\) and \(G_{\text{No}}\)}
        \label{fig:proofteo2a}
    \end{subfigure}
    \hspace{2cm}
    \begin{subfigure}{0.3\textwidth}
        \centering
        \includegraphics[width=\linewidth]{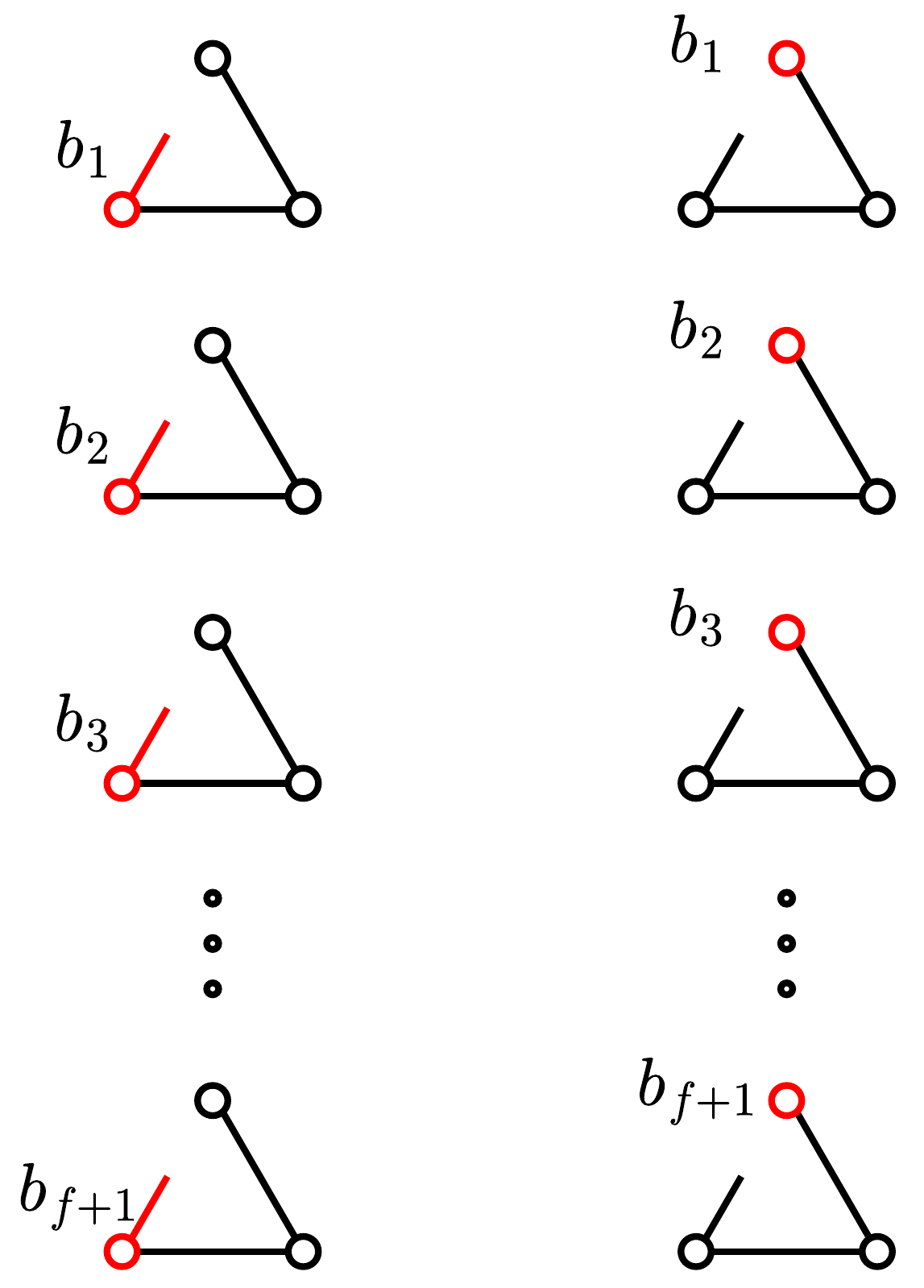}
        \caption{\centering \(G_{\text{Yes}}^1\) and \(G_{\text{No}}^2\)}
        \label{fig:proofteo2b}
    \end{subfigure}
\caption{\centering Examples of the input graphs \(G_{\text{Yes}}\) and \(G_{\text{No}}\), and representation of the indistinguishable instances obtained from them }
    \label{fig:graphs}
\end{figure}

\section{Discussion} 

In this work we have presented an algorithm in the Byzantine congested clique model that recognizes any hereditary graph class \(\mathcal{G}\). We have shown that, with high probability, our algorithm accepts if \(G\) belongs to \(\mathcal{G}\) and rejects if \(G\) is \(|B|\)-far from \(\mathcal{G}\).

Our algorithm highlights the robustness of hereditary graph classes in the Byzanyine framework,  suggesting an inherent resistance to malicious manipulation in distributed systems. 

We have also shown that the gap of our algorithm is the best possible, in the sense that there are hereditary graph classes that cannot be recognized with a gap smaller than \(|B|\).
We believe that the impossibility bound we have established could be more general, being applicable to every non-trivial hereditary graph class. 

Furthermore, we suggest to study how other problems behave in this model. In particular, optimization and approximation problems in the presence of Byzantine nodes present a challenging and relevant field of study. The complexity introduced by the presence of Byzantine nodes in optimization problems could reveal new strategies for mitigating their influence or even utilizing their presence to improve system efficiency.

\bibliography{biblio}

\end{document}